\title{A $2\ell k$ Kernel for $\ell$-Component Order Connectivity}
\titlerunning{A $2\ell k$ Kernel for $\ell$-Component Order Connectivity} 
\author[1]{Mithilesh Kumar}
\author[2]{Daniel Lokshtanov}
\affil[1]{Department of Informatics, University of Bergen\\
  Norway\\
  \texttt{Mithilesh.Kumar@ii.uib.no}}
\affil[2]{Department of Informatics, University of Bergen\\
  Norway\\
  \texttt{daniello@ii.uib.no}}
\authorrunning{M. Kumar and D. Lokshtanov} 
\subjclass{F.2.2 Nonnumerical Algorithms and Problems}
\keywords{Parameterized algorithms, Kernel, Component Order Connectivity, Max-min allocation, Weighted expansion}
\newcommand{\Oh}{\mathcal{O}}
\newcommand{\pol}{n^{\mathcal{O}(1)}}
\newcommand{\cclass}[1]{\textsf{\textup{#1}}}
\newcommand{\NP}{\cclass{NP}}
\begin{document}
\maketitle
\begin{abstract}

In the $\ell$-\textsc{Component Order Connectivity} problem ($\ell \in \mathbb{N}$), we are given a graph $G$ on $n$ vertices, $m$ edges and a non-negative integer $k$ and asks whether there exists a set of vertices $S\subseteq V(G)$ such that $|S|\leq k$ and the size of the largest connected component in $G-S$ is at most $\ell$. 
In this paper, we give a linear programming based kernel for $\ell$-\textsc{Component Order Connectivity} with at most $2\ell k$ vertices that takes $n^{\mathcal{O}(\ell)}$ time for every constant $\ell$. Thereafter, we provide a separation oracle for the LP of \(\ell\)-COC implying that the kernel only takes $(3e)^{\ell}\cdot\pol$ time. On the way to obtaining our kernel, we prove a generalization of the $q$-Expansion Lemma to weighted graphs. This generalization may be of independent interest.

\end{abstract}
\section{Introduction}
In the classic {\sc Vertex Cover} problem, the input is a graph $G$ and integer $k$, and the task is to determine whether there exists a vertex set $S$ of size at most $k$ such that every edge in $G$ has at least one endpoint in $S$. Such a set is called a {\em vertex cover} of the input graph $G$. An equivalent definition of a vertex cover is that every connected component of $G - S$ has at most $1$ vertex. This view of the {\sc Vertex Cover} problem gives rise to a natural generalization: can we delete at most $k$ vertices from $G$ such that every connected component in the resulting graph has at most $\ell$ vertices? Here we study this generalization. Formally, for every integer $\ell \geq 1$, we consider the following problem, called \textsc{$\ell$-Component Order Connectivity} ($\ell$-\textsf{COC}).

\smallskip
\noindent
\fbox{\parbox{\textwidth-\fboxsep}{
\textsc{$\ell$-Component Order Connectivity} ($\ell$-\textsf{COC})\\
\textbf{Input:} A graph $G$ on $n$ vertices and $m$ edges, and a positive integer $k$.\\
\textbf{Task:} determine whether there exists a set $S\subseteq V(G)$ such that $|S|\leq k$ and the maximum size of a component in $G-S$ is at most $\ell$.
}}
\smallskip

The set $S$ is called an $\ell$-COC {\em solution}. For $\ell = 1$, $\ell$-\textsf{COC} is just the {\sc Vertex Cover} problem. Aside from being a natural generalization of {\sc Vertex Cover}, the family $\{\ell$-\textsf{COC} $: \ell \geq 1\}$ of problems can be thought of as a vulnerability measure of the graph $G$ - how many vertices of $G$ have to fail for the graph to break into small connected components? For a study of $\ell$-\textsf{COC} from this perspective see the survey of Gross et al.~\cite{gross}.

From the work of Lewis and Yannakakis~\cite{LewisY80} it immediately follows that $\ell$-\textsf{COC} is NP-complete for every $\ell \geq 1$. This motivates the study of $\ell$-\textsf{COC} within paradigms for coping with NP-hardness, such as approximation algorithms~\cite{approx_book}, exact exponential time algorithms~\cite{exact_book}, parameterized algorithms~\cite{pc_book,DowneyF13book} and kernelization~\cite{kernel_survey_kratsch, kernel_survey_loks}. The $\ell$-\textsf{COC} problems have (for some values of $\ell$) been studied within all four paradigms, see the related work section. 

In this work we focus on $\ell$-\textsf{COC} from the perspective of parameterized complexity and kernelization. Our main result is an algorithm that given an instance $(G, k)$ of $\ell$-\textsf{COC}, runs in polynomial time, and outputs an equivalent instance $(G', k')$ such that $k' \leq k$ and $|V(G')| \leq 2\ell k$. This is called a {\em kernel} for $\ell$-\textsf{COC} with $2\ell k$ vertices. Our kernel significantly improves over the previously best known kernel with $O(\ell k(k+\ell))$ vertices by Drange et al.~\cite{DrangeDH14}. Indeed, for $\ell = 1$ our kernel matches the size of the smallest known kernel for {\sc Vertex Cover}~\cite{ChenKJ01} that is based on the classic theorem of Nemhauser and Trotter~\cite{NemhauserT74}. 

\medskip
\noindent
{\bf Related Work.} $1$-\textsf{COC}, better known as {\sc Vertex Cover}, is extremely well studied from the perspective of approximation algorithms~\cite{approx_book,dinur2005hardness}, exact exponential time algorithms~\cite{FominGK09,Robson86,XiaoN13}, parameterized algorithms~\cite{pc_book,ChenKX10} and kernelization~\cite{ChenKJ01,NemhauserT74}. The kernel with $2k$ vertices for {\sc Vertex Cover} is considered one of the fundamental results in the field of kernelization. The $2$-\textsf{COC} problem is also well studied, and has been considered under several different names. The problem, or rather the dual problem of finding a largest possible set $S$ that induces a subgraph in which every connected component has order at most $2$, was first defined by Yannakakis~\cite{Yannakakis81a} under the name Dissociation Set. The problem has attracted attention in exact exponential time algorithms~\cite{KardosKS11,XiaoK15}, the fastest currently known algorithm~\cite{XiaoK15} has running time $O(1.3659^n)$. $2$-\textsf{COC} has also been studied from the perspective of parameterized algorithms~\cite{ChangCHRS16,Tu15} (under the name {\sc Vertex Cover} $P_3$) as well as approximation algorithms~\cite{Tu}. The fastest known parameterized algorithm, due to Chang et al.~\cite{ChangCHRS16} has running time $1.7485^kn^{O(1)}$, while the best approximation algorithm, due to Tu and Zhou~\cite{Tu} has factor $2$. 

For the general case of $\ell$-\textsf{COC}, $\ell \geq 1$, Drange et al.~\cite{DrangeDH14} gave a simple parameterized algorithm with running time $(\ell + 1)^kn^{O(1)}$, and a kernel with $O(k\ell(\ell+k))$ vertices. The parameterized algorithm of Drange et al.~\cite{DrangeDH14} can be improved to $(\ell + 0.0755)^kn^{O(1)}$ by reducing to the $(\ell + 1)$-{\sc Hitting Set} problem, and applying the iterative compression based algorithm for $(\ell + 1)$-{\sc Hitting Set} due to Fomin et al.~\cite{FominGKLS10}. The reduction to $(\ell + 1)$-{\sc Hitting Set}, coupled with the simple factor $(\ell + 1)$-approximation algorithm for  $(\ell + 1)$-{\sc Hitting Set}~\cite{approx_book} immediately also yields an $(\ell + 1)$-approximation algorithm for $\ell$-\textsf{COC}. There has also been some work on $\ell$-\textsf{COC} when the input graph is restricted to belong to a graph class, for a discussion of this work see~\cite{DrangeDH14}.

Comparing the existing results with our work, we see that our kernel improves over the kernel of Drange et al.~\cite{DrangeDH14} from at most $O(k\ell(\ell+k))$ vertices to at most $2k\ell$ vertices. Our kernel is also the first kernel with a linear number of vertices for every fixed $\ell \geq 2$.

\medskip
\noindent
{\bf Our Methods.} Our kernel for $\ell$-\textsf{COC} hinges on the concept of a {\em reducible pair} of vertex sets. Essentially ({\em this is not the formal definition used in the paper!}), a reducible pair is a pair $(X,Y)$ of disjoint subsets of $V(G)$ such that $N(Y) \subseteq X$, every connected component of $G[Y]$ has size at most $\ell$, and every solution $S$ to $G$ has to contain at least $|X|$ vertices from $G[X \cup Y]$. If a reducible pair is identified, it is easy to see that one might just as well pick all of $X$ into the solution $S$, since any solution has to pay $|X|$ inside $G[X \cup Y]$, and after $X$ is deleted, $Y$ breaks down into components of size at most $\ell$ and is completely eliminated from the graph.

At this point there are several questions. (a) How does one argue that a reducible pair is in fact reducible? That is, how can we prove that any solution has to contain at least $|X|$ vertices from $X \cup Y$? (b) How big does $G$ have to be compared to $k$ before we can assert the existence of a reducible pair? Finally, (c) even if we can assert that $G$ contains a reducible pair, how can we find one in polynomial time?

To answer (a) we restrict ourselves to reducible pairs with the additional property that each connected component $C$ of $G[Y]$ can be assigned to a vertex $x \in N(C)$, such that for every $x \in X$ the total size of the components assigned to $x$ is at least $\ell$. Then $x$ together with the components assigned to it form a set of size at least $\ell+1$ and have to contain a vertex from the solution. Since we obtain such a connected set for each $x \in X$, the solution has to contain at least $|X|$ vertices from $X \cup Y$. Again we remark that this definition of a reducible pair is local to this section, and not the one we actually end up using. 

To answer (b) we first try to use the $q$-Expansion Lemma (see~\cite{pc_book}), a tool that has found many uses in kernelization. Roughly speaking the Expansion Lemma says the following: if $q \geq 1$ is an integer and $H$ is a bipartite graph with bipartition $(A, B)$ and $B$ is at least $q$ times larger than $A$, then one can find a subset $X$ of $A$ and a subset $Y$ of $B$ such that $N(Y) \subseteq X$, and an assignment of each vertex $y \in Y$ to a neighbor $x$ of $y$, such that every vertex $x$ in $X$ has at least $q$ vertices in $Y$ assigned to it. 

Suppose now that the graph does have an $\ell$-\textsf{COC} solution $S$ of size at most $k$, and that $V(G) \setminus S$ is sufficiently large compared to $S$. The idea is to apply the Expansion Lemma to the bipartite graph $H$, where the $A$ side of the bipartition is $S$ and the $B$ side has one vertex for each connected component of $G - S$. We put an edge in $H$ between a vertex $v$ in $S$ and a vertex corresponding to a component $C$ of $G - S$ if there is an edge between $v$ and $C$ in $G$. If $G - S$ has at least $|S| \cdot \ell$ connected components, we can apply the $\ell$-Expansion Lemma on $H$, and obtain a set $X \subseteq S$, and a collection ${\cal Y}$ of connected components of $G - X$ satisfying the following properties. Every component $C \in {\cal Y}$ satisfies $N(C) \subseteq X$ and $|C| \leq \ell$. Furthermore, there exists an assignment of each connected component $C$ to a vertex $x \in N(C)$, such that every $x \in X$ has at least $\ell$ components assigned to it. Since $x$ has at least $\ell$ components assigned to it, the total size of the components assigned to $x$ is at least $\ell$. But then, $X$ and $Y = \bigcup_{C \in {\cal Y}} C$ form a reducible pair, giving an answer to question (b). Indeed, this argument can be applied whenever the number of components of $G - S$ is at least $\ell \cdot |S|$. Since each component of $G - S$ has size at most $\ell$, this means that the argument can be applied whenever $|V(G) \setminus S| \geq \ell^2 \cdot |S| \geq \ell^2k$. 

Clearly this argument fails to yield a kernel of size $2\ell k$, because it is only applicable when $|V(G)| = \Omega(\ell^2k)$.
At this point we note that the argument above is extremely wasteful in one particular spot: we used the {\em number} of components assigned to $x$ to lower bound the {\em total size} of the components assigned to $x$. To avoid being wasteful, we prove a new variant of the Expansion Lemma, where the vertices on the $B$ side of the bipartite graph $H$ have non-negative integer weights. This new Weighted Expansion lemma states that if $q, W \geq 1$ are integers, $H$ is a bipartite graph with bipartition $(A, B)$, every vertex in $B$ has a non-negative integer weight which is at most $W$, and the total weight of $B$ is at least $(q+W-1) \cdot |A|$, then one can find a subset $X$ of $A$ and a subset $Y$ of $B$ such that $N(Y) \subseteq X$, and an assignment of each vertex $y \in Y$ to a neighbor $x$ of $y$, such that for every vertex in $X$, the total weight of the vertices assigned to it is at least $q$. The proof of the Weighted Expansion Lemma is based on a combination of the usual, unweighted Expansion Lemma with a variant of an argument by Bez\'{a}kov\'{a} and Dani~\cite{BezakovaD05} to round the linear program for Max-min Allocation of goods to customers.

Having the Weighted Expansion Lemma at hand we can now repeat the argument above for proving the existence of a reducible pair, but this time, when we build $H$, we can give the vertex corresponding to a component $C$ of $G - S$ weight $|C|$, and apply the Weighted Expansion Lemma with $q = \ell$ and $W = \ell$. Going through the argument again, it is easy to verify that this time the existence of a reducible pair is guaranteed whenever $|V(G) \setminus S| \leq (2\ell - 1)k$, that is when $|V(G)| \geq 2\ell k$.

We are now left with question (c) - the issue of how to {\em find} a reducible pair in polynomial time. Indeed, the proof of existence crucially relies on the knowledge of an (optimal) solution $S$. To find a reducible pair we use the linear programming relaxation of the $\ell$-\textsf{COC} problem. We prove that an optimal solution to the LP-relaxation has to highlight every reducible pair $(X,Y)$, essentially by always setting all the variables corresponding to $X$ to $1$ and the variables corresponding to $Y$ to $0$. For {\sc Vertex Cover} (i.e  $1$-\textsf{COC}), the classic Nemhauser Trotter Theorem~\cite{NemhauserT74} implies that we may simply include all the vertices whose LP variable is set to $1$ into the solution $S$. For $\ell$-\textsf{COC} with $\ell \geq 2$ we are unable to prove the corresponding statement. We are however, able to prove that if a reducible pair $(X,Y)$ exists, then $X$ (essentially) has to be assigned $1$ and $Y$ (essentially) has to be assigned $0$. We then give a polynomial time algorithm that extracts $X$ and $Y$ from the vertices assigned $1$ and $0$ respectively by the optimal linear programming solution. Together, the arguments (b) and (c) yield the kernel with $2\ell k$ vertices.
We remark that to the best of our knowledge, after the kernel for Vertex Cover~\cite{ChenKJ01} our kernel is the first example of a kernelization algorithm based on linear programming relaxations.

\medskip
\noindent
{\bf Overview of the paper.} In Section~\ref{sec:prelim} we recall basic definitions and set up notations. The kernel for $\ell$-\textsf{COC} is proved in Sections~\ref{sec:maxmin},~\ref{sec:expansion} and~\ref{sec:kernel}. In Section~\ref{sec:maxmin} we prove the necessary adjustment of the results on Max-Min allocation of Bez\'{a}kov\'{a} and Dani~\cite{BezakovaD05} that is suitable to our needs. In Section~\ref{sec:expansion} we state and prove our new Weighted Expansion Lemma, and in Section~\ref{sec:kernel} we combine all our results to obtain the kernel. In Section~\ref{sec:oracle} we improve this kernel by providing the separation oracle.
\section{Preliminaries}\label{sec:prelim}

Let $\mathbb{N}$ denote the set of positive integers $\{0,1,2,\dots\}$. For any non-zero $t\in \mathbb{N}$, $[t]:=\{1,2,\dots,t\}$. We denote a constant function $f:X\to \mathbb{N}$ such that for all $x\in X,  f(x)=c$, by $f=c$. For any function $f:X\to \mathbb{N}$ and a constant $c\in \mathbb{N}$, we define the function $f+c:X\to \mathbb{N}$ such that for all $x\in X, (f+c)(x)=f(x)+c$. We use the same symbol $f$ to denote the restriction of $f$ over a subset of it's domain, $X$.
For a set $\{v\}$ containing a single element, we simply write $v$. A vertex $u\in V(G)$ is said to be incident on an edge $e\in E(G)$ if $u$ is one of the endpoints of $e$. A pair of edges $e,e'\in E(G)$ are said to be adjacent if there is a vertex $u\in V(G)$ such that $u$ is incident on both $e$ and $e'$. For any vertex $u\in V(G)$, by $N(u)$ we denote the set of neighbors of $u$ i.e. $N(u):=\{v\in V(G)\mid uv\in E(G)\}$.
For any subgraph $X\subseteq G$, by $N(X)$ we denote the set of neighbors of vertices in $X$ outside $X$, i.e. $N(X):=(\bigcup_{u\in X}N(u))\setminus X$. A pair of vertices $u,v\in V(G)$ are called \emph{twins} if $N(u)=N(v)$. An induced subgraph on $X\subseteq V(G)$ is denoted by $G[X]$.

A \emph{path} $P$ is a graph, denoted by a sequence of vertices $v_1v_2\dots v_t$ such that for any $i,j\in [t], v_iv_j\in E(P)$ if and only if $|i-j|=1$. A \emph{cycle} $C$ is a graph, denoted either by a sequence of vertices $v_1v_2\dots v_t$ or by a sequence of edges $e_1e_2\dots e_t$, such that for any $i,j\in [t]$ $u_iu_j\in E(C)$ if and only if $|i-j|=1\mod ~t$ or in terms of edges, for any $i,j\in [t]$, $e_i$ is adjacent to $e_j$ if and only if $|i-j|=1\mod t$. The \emph{length} of a path(cycle) is the number of edges in the path(cycle). A \emph{triangle} is a cycle of length $3$. In $G$, for any pair of vertices $u,v \in V(G)$ \textsf{dist}$(u,v)$ represents the length of a shortest path between $u$ and $v$. A \emph{tree} is a connected graph that does not contain any cycle. A \emph{rooted tree} $T$ is a tree with a special vertex $r$ called the root of $T$. With respect to $r$, for any edge $uv\in E(T)$ we say that $v$ is a child of $u$ (equivalently $u$ is parent of $v$) if \textsf{dist}$(u,r)<$\textsf{dist}$(v,r)$. A \emph{forest} is a collection of trees. A \emph{rooted forest} is a collection of rooted trees. 
A \emph{clique} is a graph that contains an edge between every pair of vertices. A vertex cover of a graph is a set of vertices whose removal makes the graph edgeless.

\smallskip
\noindent
{\bf Fixed Parameter Tractability.} A {\em parameterized problem} $\Pi$ is a subset of $\Sigma^* \times \mathbb{N}$. A parameterized problem $\Pi$ is said to be \emph{fixed parameter tractable}(\textsc{FPT}) if there exists an algorithm that takes as input an instance $(I, k)$ and decides whether $(I, k) \in \Pi$ in time $f(k)\cdot n^c$, where $n$ is the length of the string $I$, $f(k)$ is a computable function depending only on $k$ and $c$ is a constant independent of $n$ and $k$. 

A \emph{kernel} for a parameterized problem $\Pi$ is an algorithm that given an instance $(T,k)$ runs in time polynomial in $|T|$, and outputs an instance $(T',k')$ such that $|T'|,k' \leq g(k)$ for a computable function $g$ and $(T,k) \in \Pi$ if and only if $(T',k') \in \Pi$. For a comprehensive introduction to \textsc{FPT} algorithms and kernels, we refer to the book by Cygan et al.~\cite{pc_book}.

A \emph{data reduction rule}, or simply, reduction rule, for a parameterized problem $Q$ is a function $\phi:\Sigma^*\times\mathbb{N}\to \Sigma^*\times\mathbb{N}$ that maps an instance $(I,k)$ of $Q$ to an equivalent instance $(I',k')$ of $Q$ such that $\phi$ is computable in time polynomial in $|I|$ and $k$. We say that two instances of $Q$ are \emph{equivalent} if $(I,k)\in Q$ if and only if $(I',k')\in Q$; this property of the reduction rule $\phi$, that it translates an instance to an equivalent one, is referred as the \emph{safeness} of the reduction rule.
\section{Max-min Allocation}\label{sec:maxmin}

We will now view a bipartite graph $G:=((A,B),E)$ as a relationship between ``customers'' represented by the vertices in $A$ and ``items'' represented by the vertices in $B$. If the graph is supplied with two functions $w_a : A \to \mathbb{N}$ and $w_b : B \to \mathbb{N}$, we treat these functions as a ``demand function'' and a ``capacity'' function, respectively. That is, we consider each item $v \in B$ to have value $w_b(v)$, and every customer $u \in A$ wants to be assigned items worth at least $w_a(u)$. An edge between $u \in A$ and $v \in B$ means that the item $v$ can be given to $u$.

A weight function $f : E(G) \to \mathbb{N}$ describes an assignment of items to customers, provided that the items can be ``divided'' into pieces and the pieces can be distributed to different customers. However this ``division'' should not create more value than the original value of the items. Formally we say that the weight function {\em satisfies} the capacity constraint $w_b(v)$ of $v \in B$ if $\sum_{uv \in E(G)} f(uv) \leq w_b(v)$. The weight function satisfies the capacity constraints if it satisfies the capacity constraints of all items $v \in B$. 

For each item $u \in A$, we say that $f$ {\em allocates} $\sum_{uv \in E(G)} f(uv)$ value to $u$. The weight function $f$ {\em satisfies} the demand $w_a(u)$ of $u \in A$ if it allocates at least $w_a(u)$ value to $u$,
and $f$ satisfies the demand constraints if it does so for all $u \in A$. In other words, the weight function satisfies the demands if every customer gets items worth at least her demand. The weight function $f$ {\em over-satisfies} a demand constraint $w_a(u)$ of $u$ if it allocates strictly more than $w_a(u)$  to $u$. 

We will also be concerned with the case where items are indivisible. In particular we say that a weight function $f : E(G) \to \mathbb{N}$ is {\em unsplitting} if for every $v \in B$ there is at most one edge $uv \in E(G)$ such that $f(uv) > 0$. The essence of the next few lemmas is that if we have a (splitting) weight function $f$ of items whose value is at most $W$, and $f$ satisfies the capacity and demand constraints, then we can obtain in polynomial-time an unsplitting weight function $f'$ that satisfies the capacity constraints and violates the demand constraints by at most $(W-1)$. In other words we can make a splitting distribution of items unsplitting at the cost of making each customer lose approximately the value of the most expensive item. 

Allocating items to customers in such a way as to maximize satisfaction is well studied in the literature. The lemmata~\ref{forrest} and~\ref{stars} are very similar, both in statement and proof, to the work of Bez\'{a}kov\'{a} and Dani~\cite{BezakovaD05}[Theorem 3.2], who themselves are inspired by Lenstra et al.~\cite{LenstraST90}. However we do not see a way to directly use the results of Bez\'{a}kov\'{a} and Dani~\cite{BezakovaD05}, because we need a slight strengthening of (a special case of) their statement.

\begin{lemma}\label{forrest}
There exists a polynomial-time algorithm that given a bipartite graph $G$, a capacity function $w_b : B \to \mathbb{N}$, a demand function $w_a : A \to \mathbb{N}$ and a weight function $f: E(G)\to \mathbb{N}$ that satisfies the capacity and demand constraints, outputs a function $f': E(G)\to \mathbb{N}$ such that $f'$ satisfies the capacity and demand constraints and the graph $G_{f'} = (V(G), \{uv \in E(G) \mid f'(uv) > 0\}) $ 
induced on the non-zero weight edges of $G$ 
is a forest.
\end{lemma}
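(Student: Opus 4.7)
My plan is a standard cycle-cancellation argument adapted to the integer/bipartite setting. I would start by observing: if $G_f$ is already a forest, set $f' = f$ and we are done. Otherwise, $G_f$ contains a cycle, and since $G$ is bipartite, every such cycle has even length. The idea is to repeatedly rotate weight around such a cycle until one edge drops to zero, strictly decreasing the number of positive-weight edges each time; since there are at most $|E(G)|$ such edges, this terminates in polynomial time.

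The key step is the rotation. Pick any cycle $C = e_1 e_2 \cdots e_{2t}$ in $G_f$, and color its edges alternately red and blue — this is possible because the length is even. Let $\epsilon$ be the minimum value of $f$ over the red edges (an integer, since $f$ takes values in $\mathbb{N}$, and strictly positive because every edge in $G_f$ has positive weight). Define $f''(e) = f(e) + \epsilon$ on the blue edges of $C$, $f''(e) = f(e) - \epsilon$ on the red edges, and $f''(e) = f(e)$ elsewhere. Each vertex $u$ appearing on $C$ is incident to exactly one red and exactly one blue edge of $C$, so $\sum_{uv \in E(G)} f''(uv) = \sum_{uv \in E(G)} f(uv)$ at every vertex; hence $f''$ satisfies exactly the same capacity constraints at each $v \in B$ and allocates exactly the same value to each $u \in A$. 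In particular, all capacity and demand constraints continue to hold. By the choice of $\epsilon$, at least one red edge $e^\star$ now has $f''(e^\star) = 0$, so $|E(G_{f''})| < |E(G_f)|$.

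Iterating this construction yields a sequence of weight functions, each of which satisfies the capacity and demand constraints and has strictly fewer support edges than the previous one. The process must terminate, and when it does, the support of the resulting function is a forest. Each iteration requires only finding a cycle in a graph (for instance by running DFS on $G_f$ and picking a back edge together with the corresponding tree path), which is polynomial, and there are at most $|E(G)|$ iterations, so the entire procedure runs in polynomial time.

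The only place that needs attention is the integrality of $\epsilon$: because we must output a function into $\mathbb{N}$, we cannot choose $\epsilon$ as an arbitrary real. Fortunately, since $f$ is integer-valued and the red edges of $C$ all carry strictly positive integer weights, $\epsilon \geq 1$ and $f''$ is again integer-valued, so this is not an obstacle. No other subtlety arises, since bipartiteness takes care of the alternating 2-coloring, and the per-vertex cancellation is immediate from each cycle vertex having exactly one red and one blue neighbor on the cycle.
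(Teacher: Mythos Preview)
Your proof is correct and essentially identical to the paper's: both use the standard alternating cycle-cancellation argument, shifting the minimum weight along one parity class of an even cycle so that at least one edge drops to zero while every vertex sum is preserved. The only cosmetic differences are that you take $\epsilon$ as the minimum over the \emph{red} edges (rather than the overall minimum and relabeling so it sits on an odd edge) and you measure progress by the number of support edges rather than by the number of cycles; both are equivalent for termination.
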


\begin{proof}
We start with $f$ and in polynomially many steps, change $f$ into the required function $f'$. If $G_{f} = (V(G), \{uv \in E(G) \mid f(uv) > 0\})$ is a forest, then we return $f'=f$. Otherwise, suppose that $G_{f}$ contains a cycle $C:=e_1e_2e_3\dots e_{2s}$. Proceed as follows. Without loss of generality, suppose $c = f(e_1) = min\{f(e) \mid e\in C\}$, and note that $c > 0$. Compute the edge weight function $f^\star:E\to \mathbb{R}$ defined as follows. For $e_i \in C$, we define $f^\star(e_i) = f(e)-c$ if $i$ is odd, and define $f^\star(e_i) = f(e)+c$ if $i$ is even. For $e \notin C$ we define $f^\star(e_i) = f(e)$.

Every vertex of $G$ is incident to either $0$ or exactly $2$ edges of $C$. If the vertex $v$ is incident to two edges of $C$ then one of these edges, say $e_{2i}$, has even index in $C$, and the other, $e_{2i+1}$ has odd. For the edge $e_{2i}$ we have $f^\star(e_{2i}) = f(e_{2i})+c$ and for $e_{2i + 1}$ we have $f^\star(e_{2i + 1}) = f(e_{2i + 1}) - c$.  Thus we conclude that for all $v\in V(G)$, $\sum_{u\in N(v)}f^\star(uv)=\sum_{u\in N(v)}f(uv)$, and that therefore $f^\star$ satisfies the capacity and demand constraints. Furthermore at least one edge that is assigned non-zero weight by $f$ is assigned $0$ by $f^\star$ and $G_{f^\star} = (V(G), \{uv \in E(G) \mid f^\star(uv) > 0\})$ has one less cycle than $G_f$. For a polynomial-time algorithm, repeatedly apply the process described above to reduce the number of edges with non-zero weight, as long as $G_{f^\star}$ contains a cycle.
\end{proof}


\begin{lemma}\label{stars}
There exists a polynomial-time algorithm with the following specifications. It takes as input a bipartite graph $G:=((A,B),E)$, a demand function $w_a:A\to \mathbb{N}$, a capacity function $w_b:B\to \mathbb{N}$, an edge weight function $f:E(G)\to \mathbb{N}$ that satisfies both the capacity and demand constraints, and a vertex $r \in A$. The algorithm outputs an unsplitting edge weight function $h: E(G)\to \mathbb{N}$ that satisfies the capacity constraints, satisfies the demands $w_a' = w_a - (W-1)$ where $W=\max_{v\in B}w_b(v)$, and additionally satisfies the demand $w_a(r)$ of $r$.
\end{lemma}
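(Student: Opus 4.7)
My plan is to first invoke Lemma~\ref{forrest} to replace $f$ by a weight function $f'$ with identical capacity and demand satisfaction but whose support is a forest $F$. I then root $F$ so that the tree containing $r$ is rooted at $r$ and every other tree is rooted at an arbitrary vertex of $A$; a tree consisting of a single isolated vertex of $B$ can be ignored. In the resulting rooted forest every non-isolated item $v \in B$ has a unique parent customer $p(v) \in A$, and I write $C(u) \subseteq B$ for the set of child items of a customer $u$.

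The rounding rule I would use to define $h$ is: for each item $v$, if there is a neighbor $u^* \in A$ with $f'(u^*v) = w_b(v)$, set $h(u^*v) := w_b(v)$ and $h(u'v) := 0$ for all other neighbors $u'$; otherwise set $h(p(v)v) := w_b(v)$ and $h(u'v) := 0$ on all other edges incident to $v$. At most one such full neighbor $u^*$ can exist for a given $v$, since two of them would violate capacity. The resulting $h$ is unsplitting by construction and saturates the capacity constraint of every non-isolated item.

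The main verification is the demand bound, which I would establish through two observations. First, every item $v \in C(u)$ is assigned to $u$: because $uv$ is a forest edge we have $f'(uv) > 0$, so for any other neighbor $u'$ of $v$, $f'(u'v) \leq w_b(v) - f'(uv) < w_b(v)$; hence the only possible full neighbor of $v$ is $u$ itself, and in both branches of the rule $v$ goes to $u$. Second, if $p(u)$ is \emph{not} assigned to $u$, then $p(u)$ must fall into the tree-parent branch of the rule (a rival full neighbor $u^*$ would force $f'(u\,p(u)) = 0$, contradicting that $u\,p(u)$ is a forest edge), and that branch is triggered only when $f'(u'\,p(u)) < w_b(p(u))$ for every $u'$; by integrality this yields $f'(u\,p(u)) \leq W - 1$. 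Combining these, $u$'s new allocation is at least $\sum_{v \in C(u)} w_b(v) \geq \sum_{v \in C(u)} f'(uv) \geq w_a(u) - f'(u\,p(u)) \geq w_a(u) - (W-1)$, while for $u = r$ there is no parent item so the new allocation cannot decrease below the original $w_a(r)$. The delicate point, and the reason for including the full-neighbor branch in the rule, is that the naive ``assign each item to its tree parent'' rounding can lose a full $W$ at some customer when $f'(u\,p(u)) = w_b(p(u)) = W$; the full-neighbor branch handles exactly this configuration and trims the loss to $W-1$.
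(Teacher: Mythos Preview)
Your argument is correct and follows the same skeleton as the paper's proof: apply Lemma~\ref{forrest} to obtain a forest, root the tree containing $r$ at $r$ and every other tree at some $A$-vertex, and then push each item to a single customer. The difference is that you introduce an extra ``full-neighbor'' branch in the rounding rule, while the paper simply sets $h(uv)=w_b(v)$ exactly when $u$ is the tree parent of $v$.

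Your stated reason for the extra branch---that the naive parent rule could lose a full $W$ when $f'(u\,p(u))=w_b(p(u))$---does not actually occur. Since $u$ is not a root, the item $p(u)$ has its own parent $u^\star\in A$ in the forest, and $f'(u^\star\,p(u))\ge 1$; the capacity constraint then gives $f'(u\,p(u))\le w_b(p(u))-1\le W-1$ directly. This is precisely the paper's computation, and it shows your full-neighbor branch is never triggered at $p(u)$ (and, for the same reason, the case ``$p(u)$ is assigned to $u$'' that you leave open is in fact vacuous). So your proof is right, just a little more elaborate than necessary; stripping out the full-neighbor branch recovers exactly the paper's argument.
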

\begin{proof}
Without loss of generality the graph $G_{f} := (V(G), \{uv \in E(G) \mid f(uv) > 0\})$ is a forest. If it is not, we may apply Lemma~\ref{forrest} to $f$, and obtain a function $f'$ that satisfies the capacity and demand constraints, and such that $G_{f'} = (V(G), \{uv \in E(G) \mid f'(uv) > 0\})$ is a forest. We then rename $f'$ to $f$. By picking a root in each connected component of $G_f$ we may consider $G_f$ as a rooted forest. We pick the roots as follows, if the component contains the special vertex $r$, we pick $r$ as root. If the component does not contain $r$, but contains at least one vertex $u \in A$, we pick that vertex as the root. If the component does not contain any vertices of $A$ then it does not contain any edges and is therefore a single vertex in $B$, we pick that vertex as root. Thus, every item $v \in B$ that is incident to at least one edge in $G_f$ has a unique {\em parent} $u \in A$ in the forest $G_f$. We define the new weight function $h$. For every edge $uv \in E(G)$ with $u \in 
A$ and $v \in B$ we define $h(uv)$ as follows.
	\begin{figure}[!h]
	\includegraphics[scale=0.6]{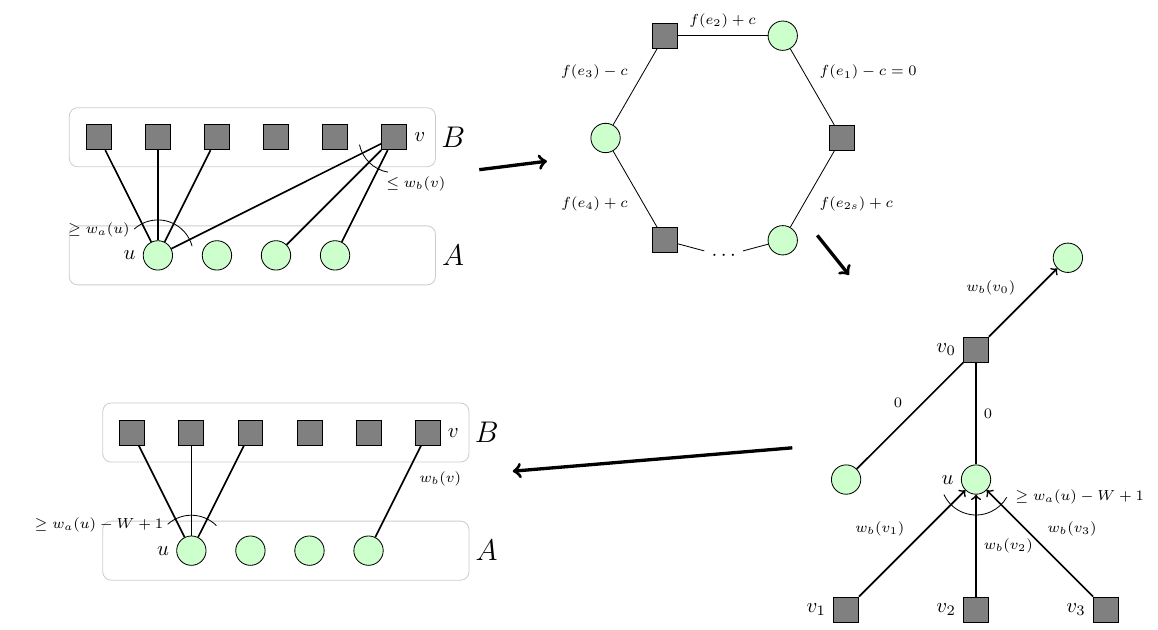}	
	\caption{Proof of Lemma \ref{forrest} and \ref{stars}. Cyclically shift smallest weight in a non-zero weight cycle to obtain a forest. Root each tree in the forest at a vertex in $A$ such that each vertex in $B$ has a parent in $A$. Assign the value of $v\in B$ to its parent $u\in A$. In this new assignment, a non-root vertex $u\in A$ \emph{loses} its parent $v_0\in B$ and $f(v_0u)\leq W-1$ which explains the cost of making a splitting assignment into an unsplitting assignment.}
	\end{figure}
%
%
%
$h(uv) = w_b(v)$ if $u$ is the parent of $v$ in $G_f$, and $h(uv) = 0$ otherwise.

Clearly $h$ is unsplitting and satisfies the capacity constraints. We now prove that $h$ also satisfies the demand constraints $w_a'$ and satisfies the demand constraint $w_a(r)$ of $r$. Consider the demand constraint $w_a'(u)$ for an arbitrary customer $u \in A$. There are two cases, either $u$ is the root of the component of $G_f$ or it is not. If $u$ is the root, then for every edge $uv \in E(G)$ such that $f(uv) > 0$ we have that $uv \in E(G_f)$ and consequently that $u$ is the parent of $v$. Hence $h(uv) = w_b(v) \geq f(uv)$, and therefore $h$ satisfies the demand $w_a(u)$ of $u$. Since $w_a(u) \geq w_a'(u)$, we have that $h$ satisfies the demand $w_a'(u)$. Furthermore, since $r$ is the root of its component this also proves that $h$ satisfies the demand $w_a(r)$.

Consider now the case that $u$ is not the root of its component in $G_f$. Then $u$ has a unique parent in $G_f$, call this vertex $v^\star \in B$. We first prove that $f(uv^\star) \leq w_b(v^\star) - 1$. Indeed, since $v^\star$ is incident to the edge $uv^\star$ we have that $v^\star$ has a parent $u^\star$ in $G_f$, and that $u^\star \neq u$ because $v^\star$ is the parent of $u$. We have that $f(u^\star v^\star) + f(uv^\star) \leq w_b(v^\star)$ and that $f(u^\star v^\star) \geq 1$, because $u^\star v^\star$ is an edge in $G_f$. It follows that $f(uv^\star) \leq w_b(v^\star) - 1$. We now proceed to proving that $h$ satisfies the demand $w_a'(u)$.

For every edge $uv \in E(G) \setminus \{uv^\star\}$ such that $uv \in E(G)$ such that $f(uv) > 0$ we have that $uv \in E(G_f)$ and consequently that $u$ is the parent of $v$. Hence we have that $h(uv) = w_b(v) \geq f(uv)$. Furthermore $h(uv^\star) = 0$ while $f(uv^\star) \leq w_b(v^\star) - 1 \leq W - 1$. Therefore $h$ satisfies the demand $w_a'(u)$.
\end{proof}
\section{The Weighted Expansion Lemma}\label{sec:expansion}
Our kernelization algorithm will use ``$q$-expansions'' in bipartite graphs, a well known tool in kernelization~\cite{pc_book}. We begin by stating the definition of a $q$-expansion and review the facts about them that we will use.
\begin{definition}[\textbf{$q$-expansion}]
	Let $G:=((A,B),E)$ be a bipartite graph. We say that $A$ has $q$-expansion into $B$ if there is a family of sets $\{V_a\mid V_a\subseteq N(a),|V_a|\geq q,a\in A\}$ such that for any pair of vertices $a_i,a_j\in A$,$i\neq j$, $V_{a_i}\cap V_{a_j}=\emptyset$.
\end{definition}

\begin{definition}[\textbf{Twin graph}]\label{def:twing}
For a bipartite graph $G:=((A,B),E)$ with a weight function $w_b:B\to \mathbb{N}$, the twin graph $T_{AB}:=(A,B')$ of $G$ is obtained as follows: $B'$ contains $|w_b(v)|$ twins of every vertex $v\in B$ i.e. $B':=\{v_1,v_2,\dots v_{w_b(v)}\mid v\in B\}$ and edges in $T_{AB}$ such that for all $v\in B$ and $i\in[w_b(v)], N(v_i)=N(v)$ i.e. $E(T_{AB}):=\{av_i|a\in A,v_i\in B',v\in B,av\in E(G)\}$. 
\end{definition}

\begin{lemma}\label{q}\cite{pc_book}
Let $G$ be a bipartite graph with bipartition $(A,B)$. Then there is a $q$-expansion from $A$ into $B$ if and only if $|N(X)|\geq q|X|$ for every $X\subseteq A$. Furthermore, if there is no $q$-expansion from $A$ into $B$, then a set $X\subseteq A$ with $|N(X)|<q|X|$ can be found in polynomial-time.
\end{lemma}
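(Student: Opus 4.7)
The plan is to reduce the statement to the classical Hall's Marriage Theorem via a standard cloning construction. Define an auxiliary bipartite graph $H$ with bipartition $(A',B)$, where $A':=\{a_1,\dots,a_q \mid a\in A\}$ contains $q$ copies of each vertex of $A$, and place the edge $a_iv$ in $H$ precisely when $av\in E(G)$. A $q$-expansion from $A$ into $B$ in $G$ is in natural bijection with a matching in $H$ that saturates $A'$: given an expansion $\{V_a\}_{a\in A}$, pair each $a_i$ with a distinct element of $V_a$; conversely, a matching saturating $A'$ partitions its $B$-endpoints into $|A|$ groups of size $q$, one group per $a\in A$, which form the expansion.

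The ``only if'' direction is immediate: for any $X\subseteq A$, the sets $\{V_a\}_{a\in X}$ are pairwise disjoint subsets of $N(X)$ each of size at least $q$, so $|N(X)|\geq q|X|$. For the ``if'' direction, I verify Hall's condition in $H$. Given any $X'\subseteq A'$, let $X\subseteq A$ be the set of vertices having at least one copy in $X'$; then $N_H(X')=N_G(X)$ (all copies of the same $a$ have identical neighborhoods) and $|X'|\leq q|X|$. The hypothesis applied to $X$ yields $|N_H(X')|=|N_G(X)|\geq q|X|\geq |X'|$, so Hall's condition holds in $H$ and a matching saturating $A'$ exists.

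For the algorithmic statement, compute a maximum matching $M$ in $H$ in polynomial time using any standard bipartite matching algorithm. If $M$ saturates $A'$, extract the expansion as above. Otherwise, let $U\subseteq A'$ be the unmatched vertices and let $Z$ be the set of vertices reachable from $U$ by $M$-alternating paths, computed by BFS; the standard König-type argument shows $|N_H(Z\cap A')|<|Z\cap A'|$. To pull this violator back to $A$, close $X':=Z\cap A'$ under copy-inclusion by adding every missing copy of each vertex already partially represented in $X'$. This operation leaves $N_H(X')$ unchanged and transforms $X'$ into $\{a_1,\dots,a_q : a\in X\}$ for some $X\subseteq A$, so $|X'|=q|X|$ and $|N_G(X)|=|N_H(X')|<q|X|$. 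The only real subtlety is precisely this final translation step, since the matching algorithm delivers a violator on the cloned side $A'$ rather than in $A$ directly; the copy-closure trick resolves it without affecting the strict inequality.
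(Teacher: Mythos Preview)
Your proof is correct. The paper does not supply its own proof of this lemma; it is quoted verbatim from the textbook~\cite{pc_book}, so there is nothing to compare against beyond noting that your argument is exactly the standard one: clone each $a\in A$ into $q$ copies, reduce $q$-expansion to a matching saturating the cloned side, invoke Hall's theorem for existence, and extract a deficient set from the alternating-path reachability set when the maximum matching fails to saturate. The copy-closure step at the end is the right way to translate the violating set back to $A$, and your justification that closure leaves $N_H(X')$ fixed while only enlarging $|X'|$ is what makes the strict inequality survive.
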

\begin{lemma}[Expansion Lemma~\cite{pc_book}]\label{expansion}
Let $q\geq 1$ be a positive integer and $G$ be a bipartite graph with vertex bipartition $(A,B)$ such that $|B|\geq q|A|$, and
there are no isolated vertices in $B$. Then there exist nonempty vertex sets $X\subseteq A$ and $Y\subseteq B$ such that there is a $q$-expansion of $X$ into $Y$, and no vertex in $Y$ has a neighbor outside $X$, i.e. $N(Y)\subseteq X$. Furthermore, the sets $X$ and $Y$ can be found in time polynomial in the size of $G$.
\end{lemma}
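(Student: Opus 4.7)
The plan is to prove the Expansion Lemma by induction on $|A|$, using Lemma~\ref{q} as the main tool in each step. The base case $|A|=1$ is immediate: writing $A=\{a\}$, every vertex in $B$ is adjacent to $a$ since $B$ has no isolated vertices, so $X=\{a\}$ together with $Y=B$ and the single witness set $V_a=B$ (of size at least $q$ by hypothesis) works, and $N(Y)\subseteq A=X$ trivially.

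For the inductive step I would first invoke Lemma~\ref{q}. If a $q$-expansion $\{V_a\}_{a\in A}$ of $A$ into $B$ exists, I take $X=A$ and $Y=\bigcup_{a\in A}V_a$; the inclusion $N(Y)\subseteq A=X$ is automatic, and the family itself witnesses the expansion. Otherwise Lemma~\ref{q} returns, in polynomial time, a nonempty $Z\subseteq A$ with $|N(Z)|<q|Z|$. I would then pass to the induced subgraph $G'$ on $(A\setminus Z)\cup(B\setminus N(Z))$ and verify both hypotheses of the lemma for $G'$: the expansion bound follows from $|B\setminus N(Z)|\geq |B|-|N(Z)|>q|A|-q|Z|=q|A\setminus Z|$, and no vertex of $B\setminus N(Z)$ is isolated in $G'$, because any neighbor of such a vertex in $A$ must lie in $A\setminus Z$ (otherwise the vertex would belong to $N(Z)$). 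A short separate check rules out $Z=A$: if $Z=A$ then $N(Z)=B$ (no isolated vertices in $B$), so $|B|<q|A|$, contradicting the hypothesis; hence $|A\setminus Z|<|A|$ and induction applies.

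Applying the inductive hypothesis to $G'$ yields nonempty $X\subseteq A\setminus Z$ and $Y\subseteq B\setminus N(Z)$ such that $X$ has a $q$-expansion into $Y$ and $N_{G'}(Y)\subseteq X$. The closing observation is that because $Y\subseteq B\setminus N(Z)$, no vertex of $Y$ has any neighbor in $Z$ in the original graph $G$, so $N_G(Y)=N_{G'}(Y)\subseteq X$, as required. Polynomial running time follows because each recursive level performs one call to the polynomial-time algorithm of Lemma~\ref{q} and strictly reduces $|A|$. I expect the main thing to get right to be the chain of hypothesis-preserving checks when descending to $G'$, specifically ensuring that $B\setminus N(Z)$ still has no isolated vertices in $G'$ and that $|B\setminus N(Z)|\geq q|A\setminus Z|$; both ultimately depend on the original no-isolated-vertex assumption combined with the strictness of $|N(Z)|<q|Z|$.
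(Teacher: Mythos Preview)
Your proof is correct and is precisely the standard inductive argument for the Expansion Lemma. Note, however, that the paper does not supply its own proof of this statement: Lemma~\ref{expansion} is quoted from~\cite{pc_book} and used as a black box. Your argument matches the textbook proof from that reference (induction on $|A|$, invoking the Hall-type characterization of Lemma~\ref{q} and recursing on $(A\setminus Z,\,B\setminus N(Z))$ when a deficient set $Z$ is found), so there is nothing to compare against in the paper itself.
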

\begin{lemma}[folklore]\label{qset}
	There exists a polynomial-time algorithm that given a bipartite graph $G:=((A,B),E)$ and an integer $q$ decides (and outputs in case yes) if there exist sets $X\subseteq A, Y\subseteq B$ such that there is a $q$-expansion of $X$ into $Y$. 
\end{lemma}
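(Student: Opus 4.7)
My plan bifurcates according to how strictly one reads the statement. If the lemma really only asks for a $q$-expansion of $X$ into $Y$ with no condition on $N_G(Y)$, the decision is a one-line observation: by definition, every $x \in X$ must admit a set $V_x \subseteq N_G(x) \cap Y$ of size at least $q$, so a non-empty $X$ exists iff some $a \in A$ has $\deg_G(a) \geq q$, in which case $X := \{a\}$ with $Y := V_a$ equal to any $q$ neighbors of $a$ works. A linear-time degree scan then decides the question and outputs a witness when appropriate.

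If instead the lemma tacitly includes the companion property $N_G(Y) \subseteq X$ from the Expansion Lemma (which is how it is most natural to use later), I plan a reduction to maximum flow. Write $Y_X := \{b \in B : N_G(b) \subseteq X\}$. One easily sees that any valid pair $(X, Y)$ may be replaced by $(X, Y_X)$ without breaking either property, so the question becomes: does there exist a non-empty $X \subseteq A$ such that Hall's condition $|N_{Y_X}(X')| \geq q|X'|$ holds for all $X' \subseteq X$ (apply Lemma~\ref{q} inside $G[X \cup Y_X]$)? I will answer this by building the usual $q$-matching network: source $s$, sink $t$, arcs $s \to a$ of capacity $q$, arcs $a \to b$ of capacity $\infty$ for edges $ab$, and arcs $b \to t$ of capacity $1$. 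A single max-flow computation yields $\delta^* := \max_{X^* \subseteq A}(q|X^*| - |N_G(X^*)|)$, and residual-graph reachability from $s$, intersected with $A$, extracts the inclusion-minimum deficient set $X^*_{\min}$ (well-defined by submodularity of $|N_G(\cdot)|$).

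My final step is to argue that a non-empty valid pair exists iff $X^*_{\min} \subsetneq A$, in which case the algorithm outputs $X := A \setminus X^*_{\min}$ and $Y := B \setminus N_G(X^*_{\min})$. The forward direction is a direct check: for every $X' \subseteq X$, the identity $|N_{Y_X}(X')| = |N_G(X' \cup X^*_{\min})| - |N_G(X^*_{\min})|$ combined with the maximality of $\delta^*$ at $X^*_{\min}$ gives $|N_{Y_X}(X')| \geq q|X'|$. The reverse direction, which I expect to be the main obstacle, uses supermodularity of the deficiency $X \mapsto q|X| - |N_G(X)|$ to show that whenever $X$ is valid the complement $A \setminus X$ also attains $\delta^*$, hence contains $X^*_{\min}$; this is standard lattice-of-min-cuts material and completes the correctness proof.
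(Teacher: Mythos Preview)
Your reading is correct: the paper tacitly requires $N_G(Y)\subseteq X$ (this is exactly how Lemma~\ref{qset} is invoked in Lemma~\ref{qexp}), so your second branch is the relevant one. Your one-shot max-flow extraction is a genuinely different route from the paper's proof, which is iterative: compute a maximum matching in the $q$-fold twin graph $T_{BA}$, let $A'\subseteq A$ be the vertices all of whose $q$ copies are saturated and $B'$ their matched partners, and if $N(B')\not\subseteq A'$ recurse on $G[A'\cup(B\setminus N_G(A\setminus A'))]$; the invariant is that any valid pair $(A^*,B^*)$ survives every recursion. Your approach is cleaner (one flow computation plus a residual scan), while the paper's stays within the matching/expansion vocabulary already in place and avoids deficiency and cut-lattice language.

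One concrete issue in your reverse direction: the claim ``$A\setminus X$ also attains $\delta^*$'' is false in general. Take $A=\{a_1,a_2\}$, $B=\{b_1,b_2,b_3\}$, edges $a_1b_1,\,a_2b_2,\,a_2b_3$, and $q=1$; then $(X,Y)=(\{a_1\},\{b_1\})$ is valid, $\delta^*=0$, but $d(\{a_2\})=-1$. Supermodularity of $d(Z)=q|Z|-|N_G(Z)|$ alone does not give $d(Z\setminus X)\geq d(Z)$ either (it points the wrong way). What does work is to use the structure $N_G(Y)\subseteq X$ directly: for any $Z\subseteq A$, the private expansion neighbours $\bigcup_{x\in Z\cap X}V_x\subseteq Y$ lie in $N_G(Z)\setminus N_G(Z\setminus X)$ (they have no neighbour outside $X$), so $|N_G(Z)|-|N_G(Z\setminus X)|\geq q|Z\cap X|$ and hence $d(Z\setminus X)\geq d(Z)$. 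Applied with $Z=X^*_{\min}$ this shows $X^*_{\min}\setminus X$ is also max-deficient, whence $X^*_{\min}\cap X=\emptyset$ by minimality; since $X\neq\emptyset$ this gives $X^*_{\min}\subsetneq A$. With this fix your argument goes through.
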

\begin{proof}
We describe a recursive algorithm.  
If $A=\emptyset$ or $B=\emptyset$, then output \textsc{no} and terminate. Otherwise, construct the twin graph $T_{BA}$ with weight function $w:A\to \mathbb{N}$ where for all $u\in A,w(u)=q$ and let $M$ be a maximum matching in $T_{BA}$. Consider the graph $G':=(A,B)$ with edge set $E(G'):=\{uv,u\in A,v\in B\mid u_iv\in M \}$. Let $A'\subseteq A$ such that for all $u\in A', d_{G'}(u)\geq q$ and let $B'\subseteq B$ such that $B':= \bigcup_{u\in A'}N_{G'}(u)$. If $N(B')\subseteq A'$, then return $(A',B')$ and terminate. Otherwise, recurse on $G[A'\cup (B\setminus N_G(A\setminus A'))]$.

 If there are no sets $X,Y$ such that there is a $q$-expansion of $X$ into $Y$, then for any pair of sets $A'\subseteq A,B'\subseteq B$ either $N(B')\setminus A'\neq \emptyset$ or $|B'|<q|A'|$. Since at each recursive step, the size of the graph with which the algorithm calls itself decreases, eventually either $A'$ becomes empty or $B\setminus N_G(A\setminus A')$ becomes empty. Hence, the algorithm outputs no. Now we need to show that if there exist sets $(A^*,B^*)$ such that there is a $q$-expansion of $A^*$ into $B^*$, then at each recursive call, we have that $A^*\subseteq A$ and $B^*\subseteq B$. At the start of the algorithm, $A^*\subseteq A$ and $B^*\subseteq B$. Since $N(B^*)\subseteq A^*$ and for all $u\in A^*$ $d_G(u)\geq q$, we have that $A^*\cup B^*\subseteq V(G')$. If $N(B')\subseteq A'$, then the algorithm of Lemma \ref{expansion} when run on $G',q$ will output $(A^*,B^*)$. 
 Note that $B^*\subseteq B'$. At the recursive step, $A^*\subseteq A'$ and since $B^*\cap N_G(A\setminus A')=\emptyset$, we have that $B^*\subseteq B'\setminus N_G(A\setminus A')$. Hence, $G[A^*\cup B^*]$ is a subgraph of $G[A'\cup (B\setminus N_G(A\setminus A'))]$ which concludes the correctness of the algorithm. Since at each recursive call the size of the graph decreases by at least 1, the total time taken by the above algorithm is polynomial in $n$.   
\end{proof}

One may think of a $q$-expansion in a bipartite graph with bipartition $(A,B)$ as an allocation of the items in $B$ to each customer in $A$ such that every customer gets at least $q$ items. For our kernel we will need a generalization of $q$-expansions to the setting where the items in $B$ have different values, and every customer gets items of total value at least $q$.

\begin{definition}[\textbf{Weighted $q$-expansion}] Let $G:=((A,B),E)$ be a bipartite graph with capacity function $w_b : B \to \mathbb{N}$. Then, a weighted $q$-expansion in $G$ is an edge weight function $f : E(G)\to \mathbb{N}$ that satisfies the capacity constraints $w_b$ and also satisfies the demand constraints $w_a = q$. 
For an integer $W \in \mathbb{N}$, the $q$-expansion $f$ is called a $W$-{\em strict} $q$-expansion if $f$ allocates at least $q+W-1$ value to at least one vertex $r$ in $A$, and in this case we say that $f$ is $W$-strict at $r$. Further, a $q$-expansion $f$ is {\em strict} (at $r$) if it is $1$-strict (at $r$). 
If $f$ is unsplitting we call $f$ an {\em unsplitting} $q$-expansion.
\end{definition}

\begin{lemma}\label{qexp}
There exists a polynomial-time algorithm that given a bipartite graph $G:=((A,B),E)$, an integer $q$ and a capacity function $w_b : B\to \mathbb{N}$ outputs (if it exist) two sets $X\subseteq A$ and $Y\subseteq B$ along with a weighted $q$-expansion in $G[X\cup Y]$ such that $N(Y)\subseteq X$.
\end{lemma}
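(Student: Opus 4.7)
The plan is to reduce the weighted setting to the unweighted one via the twin graph of Definition~\ref{def:twing}, and then invoke Lemma~\ref{qset}. First I would build $T_{AB}=(A,B')$, where each $v\in B$ is replaced by $w_b(v)$ twin copies $v_1,\dots,v_{w_b(v)}$ sharing $v$'s neighborhood. Then I would run the algorithm of Lemma~\ref{qset} on $(T_{AB},q)$; if it reports that no $q$-expansion exists, the lemma will be established in the ``no'' direction (see below).

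If Lemma~\ref{qset} returns $X\subseteq A$, $Y'\subseteq B'$, and a family of pairwise disjoint sets $\{V_x\subseteq N_{T_{AB}}(x)\}_{x\in X}$ with $|V_x|\ge q$ and (as exhibited by its algorithm) $N_{T_{AB}}(Y')\subseteq X$, I project this back to $G$. Concretely, I set $Y:=\{v\in B : v_i\in Y'\text{ for some }i\}$ and define $f:E(G[X\cup Y])\to\mathbb{N}$ by $f(xv):=|\{i : v_i\in V_x\}|$. The capacity constraint $\sum_{x\in X}f(xv)\le w_b(v)$ falls out from the disjointness of the $V_x$'s together with the fact that $v$ has only $w_b(v)$ copies, and the demand constraint $\sum_v f(xv)=|V_x|\ge q$ is immediate. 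Because twin copies inherit their parents' neighborhoods, $N_G(Y)=N_{T_{AB}}(Y')\subseteq X$, giving the required weighted $q$-expansion in $G[X\cup Y]$.

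For the converse, which I need in order to justify the algorithm reporting ``no'', I would observe that any weighted $q$-expansion $(X,Y,f)$ in $G$ with $N(Y)\subseteq X$ can be lifted to a $q$-expansion in $T_{AB}$: walk through each edge $xv$ with $f(xv)>0$ and hand $x$ a fresh batch of $f(xv)$ twin copies of $v$, which is always possible by the capacity constraint $\sum_x f(xv)\le w_b(v)$. The disjointness, the bound $|V_x|\ge q$, and the neighborhood condition are then automatic, so Lemma~\ref{qset} applied to $T_{AB}$ would have returned a solution.

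The main step to be careful about is not a deep obstacle but rather a bookkeeping one: consistently matching up edge weights in $G$ with counts of twin copies in $T_{AB}$ so that both the demand at each $x\in X$ and the containment $N(Y)\subseteq X$ transfer cleanly in both directions. The only other thing to monitor is the running time: the twin graph has $|B'|=\sum_{v\in B}w_b(v)$ vertices, which is polynomial in the input whenever the capacities are polynomially bounded, as is the case in every invocation coming from our kernelization, where the weights will be connected-component sizes in $G$.
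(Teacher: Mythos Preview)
Your proposal is correct and follows essentially the same route as the paper: build the twin graph $T_{AB}$, invoke Lemma~\ref{qset} on it, and project the resulting $q$-expansion back to a weighted $q$-expansion in $G$. Your write-up is in fact slightly more complete than the paper's: you supply the converse (``no'') direction explicitly by lifting a weighted $q$-expansion in $G$ to an unweighted one in $T_{AB}$, and you flag that the twin graph has $\sum_{v\in B} w_b(v)$ vertices, so polynomial running time requires the capacities to be polynomially bounded---a point the paper leaves implicit.
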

\begin{proof}
Construct the twin graph $T_{AB}:=(A,B')$ of $G$. Run the algorithm of Lemma \ref{qset} with input $T_{AB},q$ that outputs sets $X\subseteq A$ and $Y'\subseteq B'$ such that $X$ has $q$-expansion into $Y'$ and $N(Y')\subseteq X$. Consider the set $Y:=\{v\in B\mid v_i\in Y'\}$. Define a weight function $f:E(G[X\cup Y])\to \mathbb{N}$ as follows: for all $uv\in E(G[X\cup Y])$ $f(uv)=|\{v_i\in Y'|v_i \text{ matched to }u\}|$. 

Clearly, $N(Y)\subseteq X$. Now we claim that $f$ is a weighted $q$-expansion in $G[X\cup Y]$ with capacity function $w_b$ and demand function $w_a=q$. For any vertex $u\in A$, there are at least $q$ vertices in $Y'$ are matched to $u$. Hence for all $u\in A$, we have that $\sum_{v\in N(u)}f(uv)\geq q=w_a$. At the same time, for any vertex $v\in B$, there are at most $w_b(v)$ copies of $v$ in $Y'$. Therefore, for all $v\in Y$ we have $\sum_{u\in N(v)}f(uv)\leq w_b(v)$. 
\end{proof}
\begin{lemma}\label{qstrict}
	There exists a polynomial-time algorithm that given a weighted $q$-expansion $f:E(G)\to \mathbb{N}$ in $G:=((A,B),E)$, a capacity function $w_b:B\to \mathbb{N}$ and an integer $W$ such that $W=\max_{e\in E(G)}f(e)$ outputs an unsplitting $W$-strict weighted $(q-W+1)$-expansion in $G$. 
\end{lemma}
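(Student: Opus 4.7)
The plan is to run $f$ first through Lemma~\ref{forrest} and then through Lemma~\ref{stars}. First I would apply Lemma~\ref{forrest} to $f$, replacing $f$ with the returned function $f'$ whose support $G_{f'}$ is a forest and which still satisfies the capacity $w_b$ and the demand $w_a \equiv q$. Next I would pick a root $r \in A$ lying in a tree of $G_f$ that contains at least one $A$-vertex, noting that any tree of $G_f$ consisting of a lone $B$-vertex is isolated and carries no weight, hence can be ignored. Finally I would invoke Lemma~\ref{stars} on the inputs $(G, w_b, w_a \equiv q, f, r)$.

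The function $h$ returned by Lemma~\ref{stars} is unsplitting, satisfies the capacity $w_b$, satisfies the reduced demand $w_a - (W-1) = q - W + 1$ at every vertex of $A$, and satisfies the full demand $q$ at the chosen root $r$. Consequently, $h$ is a weighted $(q - W + 1)$-expansion in $G$, and since $h$ allocates at least $q = (q - W + 1) + (W - 1)$ value to $r$, it is in addition $W$-strict at $r$. This is precisely the output that the lemma demands, and both Lemma~\ref{forrest} and Lemma~\ref{stars} run in polynomial time, so the overall algorithm does as well.

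The main subtlety to verify is that the $W - 1$ deficit in Lemma~\ref{stars}'s conclusion can indeed be stated in terms of $W = \max_{e} f(e)$, as opposed to the larger $\max_v w_b(v)$ one might naively worry about. The forest obtained in the first step is what makes the tighter bound hold: for a non-root $u \in A$ with forest-parent $v^\star$, the deficit incurred at $u$ by Lemma~\ref{stars}'s unsplitting construction is at most $f(u v^\star)$, which is bounded by $W$ by hypothesis. The extra $-1$ comes from the fact that $v^\star$ is incident to a second forest edge $u^\star v^\star$ with $u^\star \neq u$ and $f(u^\star v^\star) \geq 1$ (because $v^\star$ is simultaneously the forest-parent of $u$ and a forest-child of its own parent $u^\star$); this forces $f(u v^\star) \leq W - 1$ once one first tightens the capacity function so that the relevant maximum entering Lemma~\ref{stars}'s bookkeeping coincides with $W$. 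This tightening step is the delicate part of the proof, but once it is in place everything slots neatly into the statement of Lemma~\ref{stars}.
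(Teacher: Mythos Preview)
Your core approach is exactly the paper's: invoke Lemma~\ref{stars} on $(G,w_a\equiv q,w_b,f,r)$ with $r$ an arbitrary vertex of $A$, and read off that the returned $h$ is an unsplitting $W$-strict weighted $(q-W+1)$-expansion. The preliminary application of Lemma~\ref{forrest} is harmless but redundant, since Lemma~\ref{stars} already performs that step internally.

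Where you diverge from the paper is in your final paragraph. You correctly notice that Lemma~\ref{stars} is stated with $W=\max_{v\in B}w_b(v)$, whereas the present lemma is stated with $W=\max_{e\in E(G)}f(e)$, and these need not coincide. The paper's own proof simply plugs $W$ into Lemma~\ref{stars} without comment, and every subsequent use of this lemma (in Lemma~\ref{wexpansion} and Lemma~\ref{pack}) supplies $W$ as an upper bound on the capacities $w_b$, so the discrepancy appears to be a typo in the statement rather than a genuine subtlety to overcome. Your proposed ``tightening'' does not actually close the gap for the statement as literally written: the argument that $f(uv^\star)\le W-1$ via $f(uv^\star)+f(u^\star v^\star)\le w_b(v^\star)$ only yields $f(uv^\star)\le w_b(v^\star)-1$, which says nothing about $\max_e f(e)$. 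So your instinct that something is off is right, but the fix is to read $W$ as a bound on $w_b$ (as the paper implicitly does), not to attempt a sharper rounding argument.
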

\begin{proof}
Run the algorithm of Lemma \ref{stars} with inputs $G,f,w_a=q,w_b,W$ and a vertex $u\in A$. In case $f$ is strict, $u$ is the vertex $r$ that makes $f$ strict. Let the function $h:E(G)\to \mathbb{N}$ be the output of Lemma \ref{stars}. Now $h$ is an unsplitting edge weight function that satisfies the capacity constraints, satisfies the demands $q - W + 1$, and additionally satisfies the demand $q$ of $u$. Hence, $h$ is the required unsplitting weighted $W$-strict $(q-W+1)$-expansion in $G$.
\end{proof}
  
\begin{lemma}[Weighted Expansion Lemma]\label{wexpansion}
Let $q, W \geq 1$ be positive integers and $G$ be a bipartite graph with vertex bipartition $(A,B)$ and $w_b : B \to \{1,\ldots, W\}$ be a capacity function such that $\sum_{v \in B} w_b(v) \geq (q+W-1)\cdot|A|$, and there are no isolated vertices in $B$. Then there exist nonempty vertex sets $X\subseteq A$ and $Y\subseteq B$ such that $N(Y)\subseteq X$ and there is an unsplitting weighted $W$-strict $q$-expansion of $X$ into $Y$. Furthermore, the sets $X$ and $Y$ can be found in time polynomial in the size of $G$.
\end{lemma}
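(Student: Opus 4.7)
The plan is to reduce the problem to the unweighted Expansion Lemma (Lemma~\ref{expansion}) through the twin graph construction, and then to ``round'' the resulting splitting weight function into an unsplitting one using Lemma~\ref{qstrict}. Conceptually the pipeline is: (i) build $T_{AB}$; (ii) find an unweighted $(q+W-1)$-expansion in $T_{AB}$; (iii) translate it back to a splitting weighted $(q+W-1)$-expansion in $G$; and (iv) invoke Lemma~\ref{qstrict} to obtain an unsplitting and strict version.

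For steps (i)--(iii), I form the twin graph $T_{AB}=(A,B')$ of Definition~\ref{def:twing}. By construction $|B'|=\sum_{v\in B}w_b(v)\geq(q+W-1)\cdot|A|$, and because $B$ has no isolated vertex in $G$, neither does $B'$ in $T_{AB}$. Hence Lemma~\ref{expansion} applied to $T_{AB}$ with parameter $q+W-1$ returns nonempty sets $X\subseteq A$ and $Y'\subseteq B'$ admitting a $(q+W-1)$-expansion of $X$ into $Y'$ with $N(Y')\subseteq X$. Setting $Y:=\{v\in B\mid v_i\in Y'\text{ for some }i\}$ and $f(uv):=|\{v_i\in Y'\mid v_i\text{ is matched to }u\}|$ for $uv\in E(G[X\cup Y])$ yields, exactly as in the proof of Lemma~\ref{qexp}, a splitting weighted $(q+W-1)$-expansion $f$ of $X$ into $Y$ with $N(Y)\subseteq X$. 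Crucially, $f(uv)\leq w_b(v)\leq W$ on every edge.

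For step (iv), let $W^\star:=\max_{e}f(e)\leq W$ and feed $f$ (viewed as a $(q+W-1)$-expansion whose maximum edge weight is $W^\star$) to Lemma~\ref{qstrict}. The output is an unsplitting $W^\star$-strict $(q+W-W^\star)$-expansion $h$ of $X$ into $Y$. Since $W^\star\leq W$, the demand $q+W-W^\star$ is at least $q$, so $h$ already satisfies the demands of a $q$-expansion; and by the definition of $W^\star$-strictness some vertex receives allocation at least $(q+W-W^\star)+W^\star-1=q+W-1$, which is exactly the $W$-strict condition relative to $q$. Polynomial runtime follows directly from Lemmas~\ref{expansion} and~\ref{qstrict}.

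The only real subtlety is the bookkeeping in the last step, namely realising that the unweighted Expansion Lemma must be invoked with parameter $q+W-1$ rather than $q$. This ``over-expansion'' is exactly what the rounding step of Lemma~\ref{qstrict} needs as a buffer: it can lose up to $W-1$ per customer when making the distribution unsplitting, yet still deliver an honest $q$-expansion and simultaneously certify one customer with an allocation of at least $q+W-1$, which is the $W$-strict witness required by the statement.
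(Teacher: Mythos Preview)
Your proof is correct and follows essentially the same approach as the paper's: both construct the twin graph $T_{AB}$, apply the unweighted Expansion Lemma with parameter $q+W-1$ to obtain $X$, $Y'$ and hence a splitting weighted $(q+W-1)$-expansion of $X$ into $Y$, and then invoke Lemma~\ref{qstrict} to round this to an unsplitting $W$-strict $q$-expansion. Your version is in fact slightly more careful than the paper's in distinguishing $W^\star=\max_e f(e)$ from $W$ when applying Lemma~\ref{qstrict} and then verifying that the resulting $W^\star$-strict $(q+W-W^\star)$-expansion is indeed a $W$-strict $q$-expansion.
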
  
  
\begin{proof}
Construct the twin graph $T_{AB}$ from $G$ and $w_b$, the bipartition of $T_{AB}$ is $(A,B')$. Now, obtain using the Expansion Lemma~\ref{expansion} with $q' = q + W - 1$ on $T_{AB}$ sets $X \subseteq A$ and $Y' \subseteq B'$, such that $N(Y') \subseteq X$ and there is a $(q+W-1)$-expansion from $X$ to $Y'$ in $T_{AB}$.

Let $Y:=\{v\in B\mid v_i\in Y'\}$ (here the $v_i \in Y'$ are as in Definition~\ref{def:twing}). Then $N(Y) \subseteq X$ and the  $(q+W-1)$-expansion from $X$ to $Y'$ in $T_{AB}$ immediately yields a weighted $(q+W-1)$-expansion $f$ from $X$ to $Y$ in $G$. Applying Lemma~\ref{qstrict} on $G[X \cup Y]$ using the weighted $(q+W-1)$-expansion $f$ proves the statement of the lemma.
\end{proof}

\section{Obtaining the Linear Kernel}\label{sec:kernel}

\begin{definition}
For a graph $G$ and a pair of vertex-disjoint sets $X,Y\subseteq V(G)$, we define the weighted graph $\tilde{G}_{XY}$ as follows:
$V(\tilde{G}_{XY}) := X\cup \tilde{Y}$ such that there is a bijection $h:cc(G[Y])\to \tilde{Y}$ where $cc(G[Y])$ is the set of connected components of $G[Y]$. $E(\tilde{G}_{XY}) :=\{xc\mid x\in X,c\in \tilde{Y}, c=h(C) \text{ and } x\in N_G(C)\}$. We also define a weight function $w:\tilde{Y}\to \mathbb{N}$ such that for all $c\in \tilde{Y}, w(c)=|h^{-1}(c)|$.
\end{definition}
\begin{definition}[\textbf{Reducible Pair}]
For a graph $G$, a pair of vertex-disjoint sets $(X,Y)$ where $X,Y\subseteq V(G)$ is called a  (strict) reducible pair if $N(Y)\subseteq X$, the size of every component in $G[Y]$ is at most $\ell$, and there exists a (strict) weighted $(2\ell-1)$-expansion in $\tilde{G}_{XY}$.
\end{definition}
\begin{definition}
	A reducible pair $(X,Y)$ is called minimal if there is no reducible pair $(X',Y')$ such that $X'\subset X$ and $Y'\subseteq Y$.
\end{definition}
\begin{lemma}\label{biker}
	There exists a polynomial-time algorithm that given an $\ell$-COC instance $(G,k)$ together with a vertex-disjoint set pair $A,B\subseteq V(G)$ outputs (if it exists) a reducible pair $(X,Y)$ where $X\subseteq A$ and $Y\subseteq B$.
\end{lemma}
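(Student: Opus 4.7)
The plan is to reduce the search for a reducible pair to a single invocation of Lemma~\ref{qexp} on an auxiliary weighted bipartite graph. The crucial structural observation is: if $(X^*, Y^*)$ is a reducible pair with $X^* \subseteq A$ and $Y^* \subseteq B$, then $N_G(Y^*) \subseteq X^* \subseteq A$ forbids any edge from $Y^*$ to a vertex outside $A \cup Y^*$; in particular no vertex of $Y^*$ has a neighbor in $B \setminus Y^*$. Hence every connected component of $G[Y^*]$ is also a connected component of $G[B]$, it has size at most $\ell$, and its entire $G$-neighborhood lies in $A$. Call a component $C$ of $G[B]$ \emph{good} if $|C| \leq \ell$ and $N_G(C) \subseteq A$. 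So any admissible $Y$ is necessarily a union of good components.

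First I would enumerate, in polynomial time, the collection $\mathcal{C}$ of good components of $G[B]$. Then I would build the bipartite graph $H := ((A, \tilde{B}), E_H)$ with one vertex $c_C$ per $C \in \mathcal{C}$, capacity function $w(c_C) := |C| \leq \ell$, and edge $a c_C$ whenever $a \in N_G(C)$. I invoke the algorithm of Lemma~\ref{qexp} on $H$ with $q := 2\ell - 1$ and capacities $w$. If it reports no such pair, I output that no reducible pair exists. Otherwise it returns $X \subseteq A$ and $\tilde{Y} \subseteq \tilde{B}$ with $N_H(\tilde{Y}) \subseteq X$ together with a weighted $(2\ell-1)$-expansion in $H[X \cup \tilde{Y}]$. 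Finally I set $Y := \bigcup \{C : c_C \in \tilde{Y}\}$ and output $(X, Y)$.

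For soundness: every $C$ contributing to $Y$ is good, so components of $G[Y]$ have size at most $\ell$ and all $G$-neighbors of $Y$ lie in $A$; combined with $N_H(\tilde{Y}) \subseteq X$ this yields $N_G(Y) \subseteq X$. Moreover $\tilde{G}_{XY}$ is, by construction, exactly $H[X \cup \tilde{Y}]$ equipped with the same capacities, so the weighted $(2\ell-1)$-expansion returned by Lemma~\ref{qexp} directly witnesses the last condition of a reducible pair. For completeness: the structural observation forces any candidate $Y^*$ to be a union of members of $\mathcal{C}$, and the corresponding weighted $(2\ell-1)$-expansion in $\tilde{G}_{X^* Y^*}$ immediately embeds as an expansion inside an induced subgraph of $H$; hence Lemma~\ref{qexp} is guaranteed to find some valid output.

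The main potential obstacle is precisely the structural observation — one must carefully verify that components of $G[Y]$ coincide with the components of $G[B]$ contained in $Y$, so that restricting $Y$ to be a union of good components is without loss of generality. Once this is settled, the remainder is routine bookkeeping: computing $\mathcal{C}$, building $H$, and calling Lemma~\ref{qexp}. Polynomial running time is then immediate from the polynomial bound in Lemma~\ref{qexp}.
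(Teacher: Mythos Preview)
Your proposal is correct and follows the same strategy as the paper: contract the connected components of $G[B]$ into single weighted vertices, form the bipartite graph against $A$, invoke Lemma~\ref{qexp} with $q=2\ell-1$, and translate the returned $(X,\tilde{Y})$ back to $(X,Y)$. The paper's proof does exactly this using $\tilde{G}_{AB}$ directly, without your preliminary filtering to ``good'' components (those of size at most $\ell$ whose $G$-neighborhood lies in $A$). Your filtering step is a genuine improvement in rigor: as stated, the paper's proof could in principle return a $Y'$ containing a component of size larger than $\ell$ or one with a neighbor outside $A$, and then the asserted ``Clearly, $N(Y)\subseteq X$'' would fail and $(X,Y)$ would not be a reducible pair. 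Your structural observation---that any admissible $Y^*$ must already be a union of good components of $G[B]$---is precisely what justifies restricting attention to $H$ without losing completeness. So the two proofs are the same in spirit; yours is simply more carefully executed.
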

\begin{proof}
Construct $\tilde{G}_{AB}:=(A,\tilde{B})$ and run the algorithm of Lemma \ref{qexp} with input $\tilde{G}_{AB},w,q=2\ell-1$ which outputs sets $X\subseteq A$ and $Y'\subseteq \tilde{B}$ (if it exists) along with a weighted $(2\ell-1)$-expansion of $X$ into $Y'$ such that $N(Y')\subseteq X$. Now from $Y'$ we obtain the set $Y:=\bigcup_{y\in Y'}h^{-1}(y)$. Clearly, $N(Y)\subseteq X$ and hence, $(X,Y)$ is the desired reducible pair.
\end{proof}
\begin{lemma}\label{redexist}
	Given an $\ell$-COC instance $(G,k)$, if $|V(G)|\geq 2\ell k$ and $(G,k)$ is a \textsc{yes}-instance, then there exists a reducible pair $(X,Y)$.
\end{lemma}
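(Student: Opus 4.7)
The plan is to extract a reducible pair directly from an arbitrary $\ell$-\textsf{COC} solution by applying the unweighted Expansion Lemma to a twin blow-up of the bipartite ``solution versus components'' graph. Fix a solution $S\subseteq V(G)$ with $|S|\leq k$. Set $A:=S$ and let $B$ be the set of connected components of $G-S$, each carrying weight $w(C):=|C|\leq \ell$ (the bound on $|C|$ coming from the fact that $S$ is a solution). Build the bipartite graph $\tilde{G}_{AB}$ and its twin graph $T_{AB}=(A,B')$; by construction $|B'|=\sum_{C\in B}|C|=|V(G)\setminus S|$. The hypothesis $|V(G)|\geq 2\ell k$ together with $|S|\leq k$ yields
\[
|B'|\;\geq\;2\ell k-|S|\;\geq\;(2\ell-1)|S|\;=\;(2\ell-1)|A|,
\]
which is precisely the inequality needed to invoke Lemma~\ref{expansion} with $q=2\ell-1$.

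Applying the Expansion Lemma to $T_{AB}$ then produces nonempty $X\subseteq A$ and $Y'\subseteq B'$ with $N_{T_{AB}}(Y')\subseteq X$ and a $(2\ell-1)$-expansion of $X$ into $Y'$; that is, a matching that assigns $2\ell-1$ distinct twins in $Y'$ to every $x\in X$. Let $Y\subseteq V(G)\setminus S$ be the union of those components of $G-S$ that contribute at least one twin to $Y'$. Since all twins of a given component share a neighborhood in $T_{AB}$, the containment $N_G(Y)\subseteq X$ descends from $N_{T_{AB}}(Y')\subseteq X$, and $G[Y]$ is a vertex-disjoint union of components of $G-S$, so every connected component of $G[Y]$ has size at most $\ell$. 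Finally, define $f$ on the edges of $\tilde{G}_{XY}$ by letting $f(xC)$ be the number of twins of $C$ lying in $Y'$ that are matched to $x$: the capacity bound $\sum_x f(xC)\leq |C|$ holds because each twin is matched at most once, while $\sum_{C} f(xC)\geq 2\ell-1$ for every $x\in X$ is exactly the $(2\ell-1)$-expansion property. Hence $f$ is a weighted $(2\ell-1)$-expansion in $\tilde{G}_{XY}$, and $(X,Y)$ meets all three clauses of the definition of a reducible pair.

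The one place care is required is that Lemma~\ref{expansion} demands no isolated vertices on the $B'$-side of $T_{AB}$. An isolated twin arises from a connected component of $G-S$ that has no neighbor in $S$, i.e.\ a connected component of $G$ of size at most $\ell$; such components already form trivial reducible pairs in their own right (with empty $X$), so one may assume without loss of generality that none exists before invoking the main argument. The bound is tight: the estimate $|V(G)\setminus S|\geq(2\ell-1)|S|$ is exactly what the $2\ell k$ hypothesis supplies, and the gain over directly invoking the Weighted Expansion Lemma (Lemma~\ref{wexpansion}) on $\tilde{G}_{AB}$ with $q=2\ell-1$ and $W=\ell$, which would demand total weight $(3\ell-2)|A|$, comes precisely from passing to the twin graph, where per-component weights are replaced by unit-weight copies that the classical Expansion Lemma handles directly.
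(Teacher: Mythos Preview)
Your proof is correct and follows essentially the same strategy as the paper: start from a solution $S$, form the bipartite graph $\tilde{G}_{S,V\setminus S}$ with component weights, and use the inequality $|V\setminus S|\geq(2\ell-1)|S|$ together with the absence of isolated component-vertices. You are in fact more careful than the paper's own write-up: the paper stops after verifying these two preconditions and simply declares $(S,V\setminus S)$ to be the reducible pair, whereas you correctly pass to the twin graph, invoke the unweighted Expansion Lemma with $q=2\ell-1$, and explicitly extract the subsets $X\subseteq S$ and $Y\subseteq V\setminus S$ that carry the required weighted $(2\ell-1)$-expansion (and you rightly note that applying Lemma~\ref{wexpansion} directly with $q=2\ell-1$, $W=\ell$ would need the stronger bound $(3\ell-2)|A|$).
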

\begin{proof}
Without loss of generality, we can assume that $G$ is a connected graph. Let $S$ be an $\ell$-COC solution of size at most $k$. Clearly, $|V\setminus S|\geq (2\ell-1)k$. We define $A:=S$ and $B:=V\setminus S$ and construct $\tilde{G}_{AB}=(A,\tilde{B})$. We have the weight function $w_b:\tilde{B}\to \mathbb{N}$ such that for all $v\in \tilde{B},w_b(v)=|h^{-1}(v)|\leq \ell$, as the size of components in $G[V\setminus S]$ is at most $\ell$. We have that $\sum_{v\in \tilde{B}}w_b(v)\geq (2\ell-1)|A|$ and there are no isolated vertices in $\tilde{B}$. Hence, $(A,B)$ is the desired reducible pair.
\end{proof}
\begin{lemma}\label{pack}
Let $(X,Y)$ be a reducible pair. Then, there exists a partition of $X\cup Y$ into $C_1,...,C_{|X|}$ such that (i) for all $u_i\in X$, we have $u_i\in C_j$ if and only if $i=j$, (ii) for all $i\in [|X|],~|C_i|\geq \ell+1$, (iii) for every component $C$ in $G[Y]$, there exists a unique $C_i$ such that $V(C)\subseteq C_i$ and $u_i\in N(C)$ and (iv) if $(X,Y)$ is a strict reducible pair, then there exists $C_j$ such that $|C_j|\geq 2\ell+1$. 
\end{lemma}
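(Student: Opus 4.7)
The plan is to extract an unsplitting assignment of the connected components of $G[Y]$ to the vertices of $X$ via the tools of Section~\ref{sec:maxmin}, and then take each $C_i$ to consist of $u_i$ together with the components assigned to it. By the definition of a (strict) reducible pair, $\tilde{G}_{XY}$ carries a (strict) weighted $(2\ell-1)$-expansion $f$ with respect to the capacity $w(c) = |V(h^{-1}(c))| \le \ell$, where ``strict'' here is read as ``$f$ over-satisfies the demand at some $r\in X$ by at least one unit'', i.e.\ $\sum_{c} f(rc) \ge 2\ell$. The whole argument consists of turning $f$ into a well-behaved unsplitting function, reading off the partition, and checking that the numerical bounds in (ii) and (iv) fall out of the demand guarantees of the unsplitting version.

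First, I apply Lemma~\ref{forrest} to $f$ to obtain a function $f'$ whose support $G_{f'}$ is a forest and whose total allocation at every vertex equals that of $f$; in particular, the strictness of $f$ at $r$ is preserved. Since $\max_e f'(e) \le W := \max_c w(c) \le \ell$, I can now apply Lemma~\ref{stars} to $\tilde{G}_{XY}$ with capacity $w$, function $f'$, distinguished vertex $r$, and a tailored demand function $w_a : X\to\mathbb{N}$: set $w_a\equiv 2\ell-1$ (and $r$ arbitrary) in the non-strict case, or $w_a(r)=2\ell$ and $w_a(u)=2\ell-1$ for $u\neq r$ in the strict case. Strictness of $f'$ at $r$ guarantees that $f'$ already meets $w_a(r)=2\ell$, so the hypotheses of Lemma~\ref{stars} hold, and its output $h^\star$ is unsplitting, respects the capacities $w$, allocates at least $w_a(u)-(W-1)\ge\ell$ to every $u\in X$, and additionally allocates at least $w_a(r)$ to $r$.

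Now I read off the partition. Because $h^\star$ is unsplitting, for every $c \in \tilde{Y}$ there is at most one $u\in X$ with $h^\star(uc)>0$; assign $c$ to that $u$ when it exists. Any remaining $c$ (a component of $G[Y]$ outside the support of $h^\star$) is assigned to an arbitrary neighbor in $X$, which exists because $G$ may be assumed connected and $N(Y)\subseteq X$. Define $C_i = \{u_i\}\cup \bigcup_{c\text{ assigned to }u_i} V(h^{-1}(c))$. Properties (i) and (iii) are then immediate. For (ii) note that whenever $h^\star$ assigns $c$ to $u_i$, the parent rule used inside Lemma~\ref{stars} forces $h^\star(u_ic)=w(c)$; summing, $|C_i|-1 \ge \sum_{c\text{ assigned by }h^\star\text{ to }u_i} w(c) = \sum_c h^\star(u_ic) \ge \ell$. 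For (iv) the same computation at $u_i=r$ yields $|C_r|-1 \ge 2\ell$.

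The main obstacle is propagating strictness of $f$ through the two transformations. Lemma~\ref{forrest} operates by cyclic shifts along non-zero-weight cycles that preserve the total weight incident to every vertex, so over-satisfaction at $r$ survives untouched. Lemma~\ref{stars} must then be invoked with the inflated demand $w_a(r)=2\ell$, rather than a uniform $2\ell-1$, so that its clause ``additionally satisfies the demand $w_a(r)$ of $r$'' pushes $h^\star$'s allocation at $r$ up from the default $2\ell-1$ (granted by the uniform $w_a-(W-1)$ bound) to the required $2\ell$. A secondary bookkeeping point is placing components of $G[Y]$ that fall outside the support of $h^\star$; this is where the standing connectivity assumption on $G$ is used.
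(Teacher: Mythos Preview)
Your proof is correct and follows essentially the same route as the paper: turn the weighted $(2\ell-1)$-expansion on $\tilde G_{XY}$ into an unsplitting assignment via Lemma~\ref{stars}, let $C_i$ be $u_i$ together with the components assigned to $u_i$, patch in any components left outside the support, and read off the size bounds. The paper packages the ``forest then stars'' step into a single invocation of Lemma~\ref{qstrict}, whereas you call Lemma~\ref{forrest} and Lemma~\ref{stars} separately; this is a cosmetic difference. Your treatment of the strict case---feeding Lemma~\ref{stars} the inflated demand $w_a(r)=2\ell$ so that its ``additionally satisfies $w_a(r)$'' clause yields $\sum_c h^\star(rc)\ge 2\ell$---is in fact a bit cleaner than the paper's, which appeals to Lemma~\ref{qstrict} and implicitly relies on the intended reading of ``strict'' as over-satisfaction by at least one unit to get the strict inequality $>1+\ell+(\ell-1)$.
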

\begin{proof}
Construct $\tilde{G}_{XY}:=(X,\tilde{Y})$. Run the algorithm of Lemma \ref{qstrict} with input $\tilde{G}_{XY},q=2\ell-1$, and $W=\ell$(as the capacity of any vertex in $\tilde{Y}$ is at most $\ell$) which outputs an unsplitting weighted $\ell$-expansion $f'$ in $\tilde{G}_{XY}$. In polynomial time, we modify $f'$ such that if there is a vertex $v\in \tilde{Y}$ such that $\forall u\in N(v), f'(uv)=0$, we choose a vertex $u\in N(v)$ and set $f'(uv)=w_b(v)$. For each $u_i\in X$ define $C_i:=u_i\bigcup_{f'(u_iv)\neq 0}h^{-1}(v)$. Since $f'$ is unsplitting, the collection $C_1,\dots,C_{|X|}$ forms a partition of $X\cup Y$. By the definition of $C_i$, we have that for any $u_i\in X$, $u_i\in C_j$ if and only if $i=j$. For any component $C$ in $G[Y]$, $h(C)$ is matched to a unique vertex $u_i\in X$ by $f'$, we have that $V(C)\subseteq C_i$. As $f'$ is a weighted $\ell$-expansion, $|C_i|=1+\sum_{f'(u_iv)\neq 0}|h^{-1}(v)|=1+\sum_{f'(u_iv)\neq 0}f'(u_iv)\geq 1+\ell$. Let $(X,Y)$ be strict at $u_j\in X$. Then, we can 
use Lemma \ref{qstrict} to obtain the expansion $f'$ such that it is strict at $u_j$. Hence, $|C_j|=1+\sum_{f'(u_jv)\neq 0}|h^{-1}(v)|=1+\sum_{f'(u_jv)\neq 0}f'(u_jv)> 1+\ell+(\ell-1)$ which implies $|C_j|\geq 2\ell +1$. This concludes the proof of the lemma.
\end{proof}
\begin{lemma}\label{X}
	Let $(X,Y)$ be a reducible pair. If $(G,k)$ is a \textsc{yes}-instance for $\ell$-COC, then there exists an $\ell$-COC solution $S$ of size at most $k$ such that $X\subseteq S$ and $S\cap  Y=\emptyset$. 
\end{lemma}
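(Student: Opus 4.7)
The plan is to start from an arbitrary $\ell$-COC solution $S^\star$ of size at most $k$ and locally modify it on $X \cup Y$ to obtain the desired $S$. The key ingredient is the partition $C_1,\ldots,C_{|X|}$ of $X\cup Y$ produced by Lemma~\ref{pack}.

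First I would observe that each block $C_i$ is a connected subgraph of $G$ on at least $\ell+1$ vertices. Connectedness follows directly from the construction: $C_i$ consists of $u_i$ together with all components of $G[Y]$ whose image under $h$ is matched to $u_i$ by the unsplitting weighted $\ell$-expansion, and every such component is adjacent to $u_i$ in $G$ (since the corresponding edge has positive weight and hence must belong to $\tilde{G}_{XY}$, meaning $u_i \in N_G(C)$). Since $C_i$ is connected and has more than $\ell$ vertices, any $\ell$-COC solution must contain at least one vertex of each $C_i$. Because $C_1,\ldots,C_{|X|}$ partition $X\cup Y$, this gives $|S^\star \cap (X\cup Y)| \geq |X|$.

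Next I would define $S := (S^\star \setminus (X\cup Y)) \cup X$. By the inequality above, $|S| = |S^\star| - |S^\star \cap (X\cup Y)| + |X| \leq |S^\star| \leq k$. Clearly $X \subseteq S$ and $S \cap Y = \emptyset$.

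It remains to verify that $S$ is in fact an $\ell$-COC solution for $(G,k)$. The defining property $N(Y)\subseteq X\subseteq S$ ensures that in $G - S$ there are no edges between $Y$ and the rest, so every connected component of $G - S$ either lies entirely in $Y$ or entirely in $V(G)\setminus (X\cup Y \cup S^\star)$. In the first case each component is a component of $G[Y]$, which has size at most $\ell$ by definition of a reducible pair. In the second case each component is contained in a component of $G - S^\star$ (we have only removed additional vertices relative to $S^\star$), which has size at most $\ell$ because $S^\star$ is a solution. Thus no component of $G - S$ exceeds size $\ell$. The main conceptual step is the packing/connectivity argument in the first paragraph; the remainder is a straightforward bookkeeping check that the swap neither increases the solution size nor creates large components, which is exactly what $N(Y)\subseteq X$ and the component-size bound in $G[Y]$ are designed to guarantee.
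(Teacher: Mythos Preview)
Your proof is correct and follows essentially the same approach as the paper: use the packing from Lemma~\ref{pack} to show that any solution meets $X\cup Y$ in at least $|X|$ vertices, then swap to $S=(S^\star\setminus(X\cup Y))\cup X$ and verify feasibility. Your verification that $S$ is a solution is in fact more carefully spelled out than the paper's (you explicitly split the components of $G-S$ into those inside $Y$ and those contained in $G-S^\star$), but the underlying argument is identical.
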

\begin{proof}
By Lemma \ref{pack} we have that there are $C_1,\dots, C_{|X|}\subseteq X\cup Y$ vertex disjoint sets of size at least $\ell+1$ such that for all $i\in [|X|]$, $G[C_i]$ is a connected set. Let $S'$ be an arbitrary solution. Then, $S'$ must contain at least one vertex from each $C_i$. Let $S:=S'\setminus (X\cup Y)\cup X$. We have that $|S|\leq |S'|-|X|+|X|=|S'|$. As any connected set of size $\ell+1$ that contains a vertex in $Y$ also contains a vertex in $X$ and $X\subseteq S$, $S$ is also an $\ell$-COC solution. 
\end{proof}
Now we encode an $\ell$-COC instance $(G,k)$ as an \textsc{Integer Linear Programming} instance. We introduce $n=|V(G)|$ variables, one variable $x_v$ for each vertex $v\in V(G)$. Setting the variable $x_v$ to 1 means that $v$ is in $S$, while setting $x_v=0$ means that $v$ is not in $S$. To ensure that $S$ contains a vertex from every connected set of size $\ell+1$, we can introduce constraints $\sum_{v\in C}x_v\geq 1$ where $C$ is a connected set of size $\ell+1$. The size of $S$ is given by $\sum_{v\in V(G)}x_v$. This gives us the following ILP formulation:\\
$\begin{matrix}
\mbox{minimize} & \sum_{v\in V(G)}x_v,\\
\mbox{subject to} & \sum_{v\in C}x_v\geq 1 & \mbox{for every connected set }C\mbox{ of size }\ell+1\\
& 0\leq x_v\leq 1 &\mbox{ for every }v\in V(G)\\
& x_v\in \mathbb{Z} & \mbox{ for every }v\in V(G).
\end{matrix}$\\
 Note that there are $n^{\mathcal{O}(\ell)}$ connected sets of size at most $\ell$ in a graph on $n$ vertices. Hence, providing an explicit ILP requires $n^{\mathcal{O}(\ell)}$ time which forms the bottleneck for the runtime of the kernelization algorithm that follows. We consider the Linear Programming relaxation of above ILP obtained by dropping the constraint that $x\in \mathbb{Z}$.
By an optimal LP solution $S_L$ with weight $L$ we mean the set of values assigned to each variable,
and optimal value is $L$. 
For a set of vertices $X\in V(G)$, $X=1$ ($X=0$) denotes that every variable corresponding to vertices in $X$ is set to $1$ ($0$).  
\begin{lemma}\label{helper}
	Let $S_L$ be an optimal LP solution for $G$ such that $x_v=1$ for some $v\subseteq V(G)$. Then, $S_L-x_v$ is an optimal LP solution for $G-v$ of value $L-1$.
\end{lemma}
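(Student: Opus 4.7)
The plan is a straightforward two-step argument: show that the restriction of $S_L$ to $V(G)\setminus\{v\}$ is feasible for the LP of $G-v$, and then use an extension argument to rule out any strictly better LP solution for $G-v$.

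First I would establish the key combinatorial correspondence between the constraint sets of the two LPs. A set $C \subseteq V(G)\setminus\{v\}$ of size $\ell+1$ is connected in $G-v$ if and only if it is connected in $G$, because $G[C]=(G-v)[C]$ whenever $v\notin C$. Consequently, the set of constraints of the LP for $G-v$ is precisely the subset of constraints of the LP for $G$ coming from connected sets $C$ of size $\ell+1$ that avoid $v$. This immediately yields feasibility of $S_L - x_v$ for the $G-v$ LP: each such constraint was satisfied by $S_L$, and since $v\notin C$ the value of $\sum_{u\in C} x_u$ is unchanged by deleting $x_v$. The box constraints $0\le x_u\le 1$ are inherited from $S_L$. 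The objective value of $S_L-x_v$ is $\sum_{u\in V(G)\setminus\{v\}} x_u = L - x_v = L - 1$.

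For optimality, I would argue by contradiction. Suppose there exists a feasible LP solution $S'$ for $G-v$ with objective value strictly less than $L-1$. Extend $S'$ to an assignment $S''$ on $V(G)$ by setting $x_v = 1$. I claim $S''$ is feasible for the LP of $G$: any connected set $C$ of size $\ell+1$ in $G$ with $v\in C$ has $\sum_{u\in C} x_u \ge x_v = 1$, while any such $C$ with $v\notin C$ is also a connected set of size $\ell+1$ in $G-v$ (by the correspondence above) and hence the constraint is satisfied by $S'$. The objective value of $S''$ is strictly less than $(L-1)+1 = L$, contradicting the optimality of $S_L$. Therefore $S_L-x_v$ is an optimal LP solution for $G-v$ of value exactly $L-1$.

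There is no real obstacle here; the only subtlety worth articulating carefully is that deleting the vertex $v$ from the graph neither creates new connected sets of size $\ell+1$ nor destroys any that avoid $v$, so the LP constraints translate cleanly between the two instances.
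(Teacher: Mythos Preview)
Your argument is correct and follows essentially the same route as the paper: feasibility of the restriction, then a contradiction via extending a hypothetical better solution for $G-v$ by setting $x_v=1$. The paper compresses the feasibility step into the word ``Clearly'', whereas you spell out the correspondence between constraint sets; otherwise the two proofs coincide.
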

\begin{proof}
	Clearly, $S_L-x_v$ is feasible solution for $G-v$ of value $L-1$. Suppose it is not optimal. Let $S_{L'}$ be an optimal LP solution for $G-v$ such that $L'<L-1$. Then, $S_{L'}\cup x_v$ with $x_v=1$ is an optimal LP solution for $G$ with value $<L-1+1=L$ contradicting that the optimal solution value of LP for $G$ is $L$.
\end{proof}
From now on by running LP after setting $x_v=1$ for some vertex $v$, we mean running the LP algorithm for $G-v$ and including $x_v=1$ in the obtained solution to get a solution for $G$.
\begin{lemma}\label{strictx}
	Let $(X,Y)$ be a strict reducible pair. Then every optimal LP solution sets at least one variable corresponding to a vertex in $X$ to 1.
\end{lemma}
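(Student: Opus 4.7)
The plan is to argue by contradiction: assume $S_L$ is an optimal LP solution with $x_v < 1$ for every $v \in X$. First I would construct an alternative assignment $S_L'$ obtained from $S_L$ by setting $x_v := 1$ for $v \in X$, $x_v := 0$ for $v \in Y$, and leaving $x_v$ unchanged on $V(G) \setminus (X \cup Y)$. The feasibility of $S_L'$ hinges on the two structural properties of a reducible pair: since $N(Y) \subseteq X$ and every component of $G[Y]$ has size at most $\ell$, any connected set $C$ of size $\ell+1$ that avoids $X$ must lie entirely in $Y$ (impossible, as $|C|>\ell$) or entirely in $V(G) \setminus (X \cup Y)$ (where the constraint is inherited from $S_L$); sets meeting $X$ satisfy the constraint trivially. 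Optimality of $S_L$ then forces $\sum_{v \in X \cup Y} x_v \leq |X|$, as otherwise $S_L'$ would have strictly smaller objective value.

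Next I would invoke Lemma~\ref{pack} on the strict reducible pair $(X,Y)$ to obtain a partition $C_1, \dots, C_{|X|}$ of $X \cup Y$ such that each $G[C_i]$ is connected, $u_i \in C_i$, $|C_i| \geq \ell+1$ for all $i$, and $|C_j| \geq 2\ell + 1$ for some index $j$. For each $i$ I pick any connected $D_i \subseteq C_i$ with $|D_i| = \ell+1$, so the LP constraint gives $\sum_{v \in D_i} x_v \geq 1$. Summing over $i$ and using the disjointness of the $C_i$'s,
\[
|X| \;\leq\; \sum_{v \in \bigcup_i D_i} x_v \;\leq\; \sum_{v \in X \cup Y} x_v \;\leq\; |X|,
\]
so every inequality is tight; in particular $\sum_{v \in C_i} x_v = 1$ for each $i$.

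The contradiction then comes from the oversized block $C_j$. Writing $C_j = \{u_j\} \cup \bigcup_k D_{j,k}$, where each $D_{j,k}$ is a component of $G[Y]$ assigned to $u_j$ by the strict $(2\ell-1)$-expansion and $\sum_k |D_{j,k}| \geq 2\ell$, I would fix an arbitrary $v \in C_j \setminus \{u_j\}$, say $v \in D_{j,i}$. Since $|D_{j,i}| \leq \ell$, one has $\sum_{k \neq i} |D_{j,k}| \geq \ell$, so $\{u_j\} \cup \bigcup_{k \neq i} D_{j,k}$ is a connected subset of $C_j \setminus \{v\}$ of size at least $\ell+1$; a BFS from $u_j$ extracts a connected set $D' \ni u_j$ of size exactly $\ell+1$ avoiding $v$. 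The LP constraint $\sum_{w \in D'} x_w \geq 1$ combined with $\sum_{w \in C_j} x_w = 1$ forces $x_w = 0$ on $C_j \setminus D'$, and in particular $x_v = 0$. Letting $v$ range over $C_j \setminus \{u_j\}$ gives $x_w = 0$ throughout $C_j \setminus \{u_j\}$, whence $x_{u_j} = 1$, contradicting the initial assumption.

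The main obstacle, and the only place where strictness (as opposed to mere reducibility) is essential, is the step that produces, for every $v \in C_j \setminus \{u_j\}$, a connected $(\ell+1)$-subset of $C_j$ containing $u_j$ but avoiding $v$. The strict upgrade $|C_j| \geq 2\ell + 1$ is exactly what is needed: after discarding the at most $\ell$ vertices of the single component $D_{j,i}$ containing $v$, the remaining components attached to $u_j$ still total at least $\ell$ vertices and can be trimmed down to a connected set of the right size; with only the weak bound $|C_j|\geq \ell+1$ a ``large'' single component $D_{j,i}$ could block this construction.
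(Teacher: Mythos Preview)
Your proof is correct and follows essentially the same route as the paper: establish feasibility of the $X=1$, $Y=0$ modification to cap $\sum_{X\cup Y} x_v$ at $|X|$, invoke Lemma~\ref{pack} for the partition with the oversized block $C_j$, and exploit that deleting any single $G[Y]$-component from $C_j$ still leaves a connected $(\ell{+}1)$-set. The only cosmetic difference is that the paper finishes in one stroke---it picks a single $w\in C_j\setminus\{u_j\}$ with $x_w>0$ and directly contradicts feasibility on the surviving connected set---whereas you loop over all $v\in C_j\setminus\{u_j\}$ to force $x_{u_j}=1$; both rest on the identical idea.
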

\begin{proof}
By Lemma \ref{X}, we have that every connected set of size $\ell+1$ in $G[X\cup Y]$ contains a vertex in $X$. Hence, from any LP solution $S_L$, a feasible LP solution can be obtained by setting $X=1$ and $Y=0$. Since, we have at least $|X|$ many vertex disjoint LP constraints, for each $v_i\in X$, we have $\sum_{u\in C_i}x_u=1$.
By Lemma \ref{pack}, there is a set $C_j\subseteq X\cup Y$ such that $|C_j|\geq 2\ell+1$. If $x_{v_j}\neq 1$, then there is a vertex $w\in C_j$ such that $x_w>0$. Let $w\in C\subset C_j$ where $G[C]$ is a connected component in $G[Y]$. Since $|C|\leq \ell$, there is a connected set $C'$ of size at least $\ell+1$ in $G[C_j]-C$. But now $\sum_{u\in C'}x_u<1$ contradicting that $S_L$ is feasible.
\end{proof}

\begin{lemma}\label{redpair}
	 Let $(X,Y)$ be a minimal reducible pair. If for any vertex $v\in X$, an optimal LP solution sets $x_v=1$, then it also sets $X=1$ and $Y=0$.
\end{lemma}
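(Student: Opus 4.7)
I would prove this by contradiction and construct a strictly smaller reducible pair, violating minimality. Assume $S_L$ is an optimal LP solution with $x_v=1$ for some $v\in X$ but either some $u\in X$ has $x_u<1$ or some $w\in Y$ has $x_w>0$. First I would invoke Lemma~\ref{pack} to obtain the partition $\{C_i=\{u_i\}\cup Y_i\}_{i=1}^{|X|}$ of $X\cup Y$, with each $C_i$ connected of size at least $\ell+1$. Any connected subset of $C_i$ of size $\ell+1$ yields an LP constraint, so $\sum_{u\in C_i}x_u\geq 1$ and hence $\sum_{u\in X\cup Y}x_u\geq|X|$. Resetting $X=1$ and $Y=0$ (keeping the rest of $S_L$) is feasible, because $N(Y)\subseteq X$ and $G[Y]$-components have size at most $\ell$ together force every size-$(\ell+1)$ connected set that meets $X\cup Y$ to contain an $X$-vertex. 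By optimality of $S_L$, equality holds everywhere: $\sum_{u\in C_i}x_u=1$ for each $i$. Therefore $x_{u_i}=1$ forces all of $Y_i$ to have $x$-value $0$, while $x_{u_i}<1$ forces $\sum_{u\in Y_i}x_u=1-x_{u_i}>0$.

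Partition $[|X|]=A\sqcup B$ with $A=\{i:x_{u_i}=1\}$ and $B=\{i:x_{u_i}<1\}$. By hypothesis $A\ni j$ (where $v=u_j$) and $B\neq\emptyset$. Set $X'=\{u_i:i\in A\}$ and $Y'=\bigcup_{i\in A}Y_i$. I aim to show $(X',Y')$ is a reducible pair; since $X'\subsetneq X$ and $Y'\subseteq Y$, this contradicts the minimality of $(X,Y)$. The components of $G[Y']$ are components of $G[Y]$, so their sizes are at most $\ell$, and the $(2\ell-1)$-expansion condition on $\tilde G_{X',Y'}$ is obtained by restricting the splitting $(2\ell-1)$-expansion $f$ witnessing $(X,Y)$'s reducibility to $X'$: once I know the $Y_B$-side of the graph does not supply any mass to $X'$ (which follows once the neighborhood containment below is established symmetrically), the restricted $f$ still allocates at least $2\ell-1$ units to every $u_i\in X'$.

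The main obstacle is verifying $N_G(Y')\subseteq X'$. Suppose some $u_k$ with $k\in B$ is adjacent to a component $C\subseteq Y_i$ with $i\in A$. I would build a size-$(\ell+1)$ connected set $D\subseteq\{u_k\}\cup C\cup Y_k$ and derive a contradiction from the LP constraint $\sum_{u\in D}x_u\geq 1$. If $|C|\geq\ell$, I take $D=\{u_k\}\cup C'$ where $C'\subseteq C$ is a connected subset of size $\ell$ containing a neighbor of $u_k$; because $\sum_{C'}x=0$ (since $C\subseteq Y'$), the constraint becomes $x_{u_k}\geq 1$, contradicting $k\in B$. If $|C|<\ell$, I set $D=\{u_k\}\cup C\cup T$ where $T\subseteq Y_k$ is a subtree-union of size $\ell-|C|$ reaching $u_k$, chosen so as to avoid at least one vertex of $Y_k$ with positive $x$-value. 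Such a $T$ exists because $|Y_k|\geq\ell$ (from Lemma~\ref{pack}) leaves at least $|C|\geq 1$ vertices of $Y_k$ outside $T$. Then $\sum_T x<\sum_{Y_k}x=1-x_{u_k}$, so $\sum_D x=x_{u_k}+\sum_T x<1$, again a contradiction. The symmetric statement $N_G(Y\setminus Y')\cap X'=\emptyset$ needed for the expansion step in the previous paragraph is handled analogously.

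The hardest part will be the combinatorial construction of $T$ in the case $|C|<\ell$: I must produce a connected subtree-union of $Y_k$ of exactly the right size $\ell-|C|$ that still reaches $u_k$ yet avoids a prescribed positive-mass vertex. This requires a case analysis on the structure of the components of $Y_k$ (their sizes and how they attach to $u_k$), leveraging the slack $|Y_k|-|T|\geq|C|\geq 1$ to guarantee that such a witness always exists.
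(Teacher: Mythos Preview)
Your approach diverges from the paper's, and the divergence is exactly where a real gap appears. You set $Y'=\bigcup_{i\in A}Y_i$ via the partition of Lemma~\ref{pack} and then try to prove $N_G(Y')\subseteq X'$ and the ``symmetric'' statement $N_G(Y\setminus Y')\cap X'=\emptyset$ by producing size-$(\ell+1)$ connected sets that violate an LP constraint. Both steps are problematic. In your Case~2 the set $T$ need not exist: take $\ell=3$, let $Y_k$ be a single path $a$--$b$--$c$ with $u_k$ adjacent only to $a$, let $|C|=1$, and put all of the mass $1-x_{u_k}$ on $a$. Every connected $T\subseteq Y_k$ of size $\ell-|C|=2$ that reaches $u_k$ must contain $a$, so $\sum_{D}x=x_{u_k}+0+(1-x_{u_k})=1$, not $<1$. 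The slack $|Y_k|-|T|\ge 1$ only guarantees that \emph{some} vertex is omitted, not that a \emph{specified} positive-mass vertex can be omitted while keeping $T$ connected through $u_k$. The ``symmetric'' direction is not symmetric at all: there $x_{u_j}=1$, so every size-$(\ell+1)$ connected set through $u_j$ already has $\sum x\ge 1$, and the same mechanism cannot produce a $<1$ violation. Without both neighbourhood containments you cannot conclude that the restriction of the original $(2\ell-1)$-expansion $f$ to $(X',Y')$ still allocates $2\ell-1$ to every vertex of $X'$, so the claimed smaller reducible pair is not established.

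The paper sidesteps all of this by working on the complementary side and by defining the $Y$-part through neighbourhoods rather than through the Lemma~\ref{pack} partition. It also takes $X'=\{u\in X:x_u=1\}$, but sets $Z$ to be the union of those $G[Y]$-components whose entire neighbourhood lies in $X'$; then $N(Z)\subseteq X'$ holds \emph{by definition}, and minimality of $(X,Y)$ forces the total weight of $Z$-components to be strictly below $(2\ell-1)|X'|$, so the remaining weight on $Y\setminus Z$ strictly exceeds $(2\ell-1)|X\setminus X'|$. Because no $Z$-component is adjacent to $X\setminus X'$, restricting the original $(2\ell-1)$-expansion to $(X\setminus X',\,Y\setminus Z)$ loses no mass on the $X\setminus X'$ side and is therefore still a $(2\ell-1)$-expansion, now strict. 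Lemma~\ref{helper} keeps the LP optimal on $G-(X'\cup Z)$, and Lemma~\ref{strictx} then yields a vertex of $X\setminus X'$ with LP value $1$, contradicting the maximality of $X'$. No connectivity surgery inside $Y_k$ is ever needed.
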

\begin{proof}
We prove the lemma by contradiction. Let $X'\subset X$ be the largest subset of $X$ such that $X'=1$.  Consider $\tilde{G}_{XY}$. Let $Y'\subseteq \tilde{Y}$ be the set of vertices such that $N(Y')\subseteq X'$. Let $Z:=\bigcup_{v\in Y'}h^{-1}(v)$. By the minimality of $(X,Y)$, we have that $\sum_{v\in Y'}w(v) < (2\ell-1)|X'|$. Hence, $\sum_{v\in \tilde{Y}\setminus Y'}w(v)>(2\ell-1)|X\setminus X'|$. Clearly, the weighted $(2\ell-1)$-expansion in the reducible pair $(X,Y)$ when restricted to $(X\setminus X',Y\setminus Z)$ provides a weighted $(2\ell-1)$-expansion of $X\setminus X'$ into $Y\setminus Z$. This implies that $(X\setminus X',Y\setminus Z)$ is a strict reducible pair in $G-(X'\cup Z)$. By Lemma \ref{helper}, we have that the LP solution restricted to $G-(X'\cup Z)$ is optimal. Since $(X\setminus X',Y\setminus Z)$ is a strict reducible pair, by Lemma \ref{strictx}, there is a vertex $u\in X\setminus X'$ such that $x_u=1$, but this contradicts the maximality of $X'$. Therefore, if for any vertex $v\in 
X$, an LP solution sets $x_v=1$, then it sets $X=1$ and $Y=0$. 
\end{proof}
\begin{lemma}\label{ker}
	There exists a polynomial time algorithm that given an integer $\ell$ and $\ell$-COC instance $(G,k)$ on at least $2\ell k$ vertices either finds a reducible pair $(X,Y)$ or concludes that $(G,k)$ is a \textsc{no}-instance.
\end{lemma}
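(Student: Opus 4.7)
The algorithm first constructs the LP relaxation displayed above and solves it. Since $G$ contains at most $n^{\mathcal{O}(\ell)}$ connected sets of size $\ell+1$, both the construction of the LP and its solution can be carried out in $n^{\mathcal{O}(\ell)}$ time, which is polynomial for fixed $\ell$. Let $x^\star$ denote an optimal LP solution of value $L$. If $L > k$, the algorithm returns \textsc{no}-instance, which is correct because $L$ lower bounds the minimum $\ell$-\textsf{COC} solution size. Otherwise, define the disjoint sets $V_1 := \{v \in V(G) \mid x^\star_v = 1\}$ and $V_0 := \{v \in V(G) \mid x^\star_v = 0\}$ and invoke the algorithm of Lemma~\ref{biker} on the pair $(V_1, V_0)$. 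If it returns a reducible pair $(X, Y)$ with $X \subseteq V_1$ and $Y \subseteq V_0$, the algorithm outputs it; otherwise it returns \textsc{no}-instance.

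For correctness, the only non-trivial case is to show that when $(G, k)$ is a \textsc{yes}-instance with $|V(G)| \geq 2\ell k$, Lemma~\ref{biker} actually finds a reducible pair. By Lemma~\ref{redexist} a reducible pair exists; let $(X^\star, Y^\star)$ be a minimal one. Lemma~\ref{biker} will succeed on $(V_1, V_0)$ provided $X^\star \subseteq V_1$ and $Y^\star \subseteq V_0$, because then $(X^\star, Y^\star)$ itself is a reducible pair contained inside the pair supplied to Lemma~\ref{biker}. By Lemma~\ref{redpair}, the inclusions $X^\star \subseteq V_1$ and $Y^\star \subseteq V_0$ follow once we exhibit at least one vertex $v \in X^\star$ with $x^\star_v = 1$.

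The main obstacle is therefore to force the LP to assign value $1$ to some vertex of $X^\star$. The plan is to first establish that a minimal reducible pair is strict: if it were not, then every vertex of $X^\star$ would be allocated exactly $2\ell-1$ in the weighted $(2\ell-1)$-expansion of $\tilde{G}_{X^\star Y^\star}$, which would allow one to peel off a proper subset $X' \subsetneq X^\star$ together with the portion $Z \subseteq Y^\star$ of components assigned to $X'$ and obtain a strictly smaller reducible pair $(X^\star \setminus X', Y^\star \setminus Z)$, contradicting minimality; this parallels the reduction step carried out inside the proof of Lemma~\ref{redpair}. Once strictness of $(X^\star, Y^\star)$ is in hand, Lemma~\ref{strictx} directly provides the required vertex $v \in X^\star$ with $x^\star_v = 1$, and the chain Lemma~\ref{strictx} $\to$ Lemma~\ref{redpair} $\to$ Lemma~\ref{biker} then closes the correctness argument. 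The running time is dominated by the LP construction and solution, giving the polynomial (for fixed $\ell$) bound claimed in the statement.
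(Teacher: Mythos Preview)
Your proposal has a genuine gap: the claim that a \emph{minimal} reducible pair is necessarily \emph{strict} is not proved, and in fact is false. Consider the simplest situation where $|X^\star|=1$, say $X^\star=\{u\}$, and $Y^\star$ consists of components whose total size is exactly $2\ell-1$, each adjacent only to $u$. Then $(X^\star,Y^\star)$ is minimal (there is no proper nonempty subset of $X^\star$), yet the unique weighted $(2\ell-1)$-expansion allocates exactly $2\ell-1$ to $u$, so the pair is not strict. Your peeling argument also breaks for $|X^\star|\geq 2$: if you take any proper $X'\subsetneq X^\star$ and let $Z$ be the components assigned to $X'$ by the expansion, the remaining components in $Y^\star\setminus Z$ may still have neighbours in $X'$, so $N(Y^\star\setminus Z)\not\subseteq X^\star\setminus X'$ and $(X^\star\setminus X',Y^\star\setminus Z)$ need not be a reducible pair in $G$. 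The analogous step in the proof of Lemma~\ref{redpair} avoids this because it works in the graph $G-(X'\cup Z)$, where the offending neighbours have been deleted; that does not help you here, since minimality is a statement about reducible pairs in $G$ itself.

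Because of this, a single LP solve is not enough: the optimal solution $x^\star$ may assign no vertex of $X^\star$ the value $1$, and then Lemma~\ref{biker} applied to $(V_1,V_0)$ can legitimately fail even on a \textsc{yes}-instance. The paper's proof circumvents exactly this obstacle by an outer linear search: for each vertex $v$ it adds the constraint $x_v=1$, re-solves the LP, and checks whether the optimum is unchanged. The key (implicit) fact is that for any reducible pair $(X,Y)$ there is an optimal LP solution with $X=1$ and $Y=0$ (this follows from Lemma~\ref{pack}: the $|X|$ disjoint connected sets $C_i$ force $\sum_{v\in X\cup Y}x_v\geq |X|$, and setting $X=1$, $Y=0$ attains this while remaining feasible). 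Hence forcing $x_v=1$ for any $v\in X^\star$ leaves the optimum unchanged, and the resulting constrained optimum is also optimal for the original LP; Lemma~\ref{redpair} then forces $X^\star=1$, $Y^\star=0$, and Lemma~\ref{biker} succeeds. This search step is the missing ingredient in your argument.
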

\begin{proof} If $(G,k)$ is a \textsc{yes}-instance of $\ell$-COC, then by Lemma \ref{redexist}, there exists a reducible pair $(X,Y)$. We use the following algorithm to find one:
\begin{description}
	\item[Step 1] Run the LP algorithm. Let $A=1$ and $B=0$ in the LP solution.
	\item[Step 2] If both $A$ and $B$ are non-empty, then run the algorithm of Lemma \ref{biker} with input $(G,k),A,B$. If it outputs a reducible pair $(X,Y)$, then return $(X,Y)$ and terminate. Otherwise, go to step 3.	
	\item[Step 3] Now we do a linear search for a vertex in $X$. For each vertex $v\in V(G)$, do the following: in the original LP introduce an additional constraint that sets the value of the variable $x_v$ to $1$ i.e. $x_v=1$ and run the LP algorithm. If the optimal value of the new LP is the same as the optimal value of the original LP, then let $A=1$ and $B=0$ be the sets of variables set to $1$ and $0$ respectively in the optimal solution of the new LP and go to step 2.
	\item[Step 4] Output a trivial \textsc{no}-instance.  	
\end{description}
Step 1 identifies the set of variables set to 1 and 0 by the LP algorithm. By Lemma \ref{redpair}, we have that if there is a minimal reducible pair $(X,Y)$ in $G$, then $X\subseteq A$ and $Y\subseteq B$. So, in Step 2 if the algorithm succeeds in finding one, we return the reducible pair and terminate otherwise we look for a potential vertex in $X$ and set it to 1. If $(X,Y)$ exists, then for at least one vertex, setting $x_v=1$ would set $X=1$ and $Y=0$ (by Lemma~\ref{redpair}) without changing the LP value and we go to Step 2 to find it. If for each choice of $v\in V(G)$, the LP value changes when $x_v$ is set to 1, we can conclude that there is no reducible pair and output a trivial \textsc{no} instance. Since, we need to do this search at most $n$ times and each step takes only polynomial time, the total time taken by the algorithm is polynomial in the input size.
\end{proof}
\begin{theorem}\label{LinearKernel}
	For every constant $\ell\in \mathbb{N}$, $\ell$-\textsc{Component Order Connectivity} admits a kernel with at most $2\ell k$ vertices that takes $n^{\mathcal{O}(\ell)}$ time.
\end{theorem}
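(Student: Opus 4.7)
The plan is to package Lemma~\ref{ker} into an iterative kernelization procedure. I would start from an input $(G,k)$ and repeatedly do the following: if $|V(G)| < 2\ell k$, output the current instance; if $k < 0$, output a trivial \textsc{no}-instance; otherwise invoke Lemma~\ref{ker}. If Lemma~\ref{ker} reports \textsc{no}, output a trivial \textsc{no}-instance; if it returns a reducible pair $(X,Y)$, replace $(G,k)$ by $(G - (X \cup Y),\, k - |X|)$ and loop. Each iteration strictly decreases $|V(G)|$, so at most $n$ iterations occur, and by Lemma~\ref{ker} each iteration costs $n^{\mathcal{O}(\ell)}$ time (the bottleneck being the $n^{\mathcal{O}(\ell)}$ connected sets of size $\ell+1$ needed to write down the LP). Hence the total running time is $n^{\mathcal{O}(\ell)}$, and when the loop terminates either the instance has been certified \textsc{no} or $|V(G')| \leq 2\ell k' \leq 2\ell k$, giving the claimed vertex bound.

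The main technical step is verifying safeness of the reduction rule ``delete $X \cup Y$ and drop $k$ by $|X|$''. For the forward direction, suppose $(G,k)$ is \textsc{yes}. By Lemma~\ref{X} there is an $\ell$-COC solution $S$ of size at most $k$ with $X \subseteq S$ and $S \cap Y = \emptyset$; then $S \setminus X = S \setminus (X \cup Y)$ has size at most $k - |X|$, and since $N(Y) \subseteq X$ every component of $(G - (X \cup Y)) - (S \setminus X)$ is a component of $G - S$, so it has size at most $\ell$. For the reverse direction, given a solution $S'$ of size at most $k - |X|$ for the reduced instance, set $S := S' \cup X$, which has size at most $k$. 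The components of $G - S$ are either components of $G[Y]$ (size at most $\ell$ by the definition of reducible pair) or components of $(G - (X \cup Y)) - S'$ (size at most $\ell$ by hypothesis), because $N(Y) \subseteq X \subseteq S$ separates $Y$ from the rest of $G - S$.

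I also need to rule out the degenerate case $|X| > k$, which would cause $k$ to go negative. This is automatic: Lemma~\ref{pack} exhibits $|X|$ vertex-disjoint connected subsets $C_1,\dots,C_{|X|}$ of $G$, each of size at least $\ell+1$, so any $\ell$-COC solution must contain at least $|X|$ vertices. Hence if $|X| > k$ we may safely return a trivial \textsc{no}-instance; otherwise $k - |X| \geq 0$ and the reduction is well-defined.

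I do not foresee a real obstacle beyond a careful bookkeeping of the safeness argument, since Lemmas~\ref{redexist}, \ref{X}, \ref{pack}, and \ref{ker} together supply all structural content: existence of a reducible pair whenever $|V(G)| \geq 2\ell k$ and the instance is \textsc{yes}, a polynomial-time routine to extract one via the LP, the ``packing'' certificate of size $\ell+1$ connected sets that bounds $|X|$ from above by $k$, and the exchange argument that justifies committing $X$ to the solution and deleting $Y$. Combining these into the loop described above yields the theorem.
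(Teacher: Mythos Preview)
Your proposal is correct and is exactly the argument the paper intends: Theorem~\ref{LinearKernel} is stated in the paper without an explicit proof, as an immediate consequence of Lemmas~\ref{redexist}, \ref{pack}, \ref{X}, and~\ref{ker}, and the iterative reduction you describe---apply Lemma~\ref{ker} while $|V(G)|\geq 2\ell k$, delete $X\cup Y$ and decrement $k$ by $|X|$, with safeness justified via Lemma~\ref{X} and the definition of a reducible pair---is precisely how these lemmas are meant to be assembled into the kernel.
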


\section{Separation oracle for $\ell$-COC}\label{sec:oracle}
For $\ell$-COC, we have an LP with $n^{\ell+1}$ constraints: for every connected set of size $\ell+1$, we need the sum of the variables to be at least $1$. So a separation oracle for this LP should take an input $G$ and variable values $x_v$ on all the vertices and then determine whether there exists a connected set of $\ell+1$ vertices with sum less than $1$. If yes, then that is a violated constraint. Let us first define this problem formally:
%

\smallskip
\noindent
\fbox{\parbox{\textwidth-\fboxsep}{
\textsc{Min $\ell$-Connected-Subgraph ($\ell$-MCS)}\\
\textbf{Input:} A graph $G$ and vertex weights $w:V\to \mathbb{R}$.\\
\textbf{Task:} Find a connected subgraph $C$ on at least $\ell+1$ vertices such that the weight of subgraph $C$ is less than $1$, if one exists where weight of a subgraph is the sum of weights of vertices in it.
}}
\smallskip

Now, we prove that $\ell$-MCS is \NP-complete by reducing from \textsc{Set Cover}.
\begin{theorem}
	\textsc{Min $\ell$-Connected-Subgraph} is \NP-complete.
\end{theorem}
\begin{proof}
	Let $I:=(U,\mathcal{H})$ be an instance of \textsc{Set Cover} such that $|U|=n$ and $|\mathcal{H}|=m$. Construct the set-element incidence (bipartite) graph $G:=(A,B,E)$ for $I$ as follows: $A$ contains a vertex for each set in $\mathcal{H}$ and $B$ contains a vertex for each element in $U$. Two vertices $u\in A,v\in B$ are adjacent if and only if the element corresponding to $v$ belongs to the set corresponding to $u$. In addition, $B$ contains a special vertex $b$ which is adjacent to every vertex in $A$. We define the weight function as follows: $w:V(G)\to \mathbb{R}$ such that $\forall u\in A, w(u)=\frac{1}{k+1}$ and $\forall v\in B, w(v)=0$. Finally, we set $\ell:= n+k$.
	
	We claim that $I$ has a set cover of size $k$ if and only if $G$ has a connected subgraph $C$ of size $\ell+1$ and weight less than $1$. For the forward direction, assume $\mathcal{F}$ is a set cover of size $k$. Let $H\subseteq A$ corresponding to elements in $\mathcal{F}$. Then, $G[H\cup B]$ is a connected set on $k+n+1$ vertices: every vertex in $B-b$ has an edge to at least one vertex in $H$ and $b$ is adjacent to every vertex in $H$.
	
	For the backward direction, let $C$ be a connected graph on $n+k+1$ vertices and weight less than $1$. Clearly, $|A\cap V(C)|< k+1$, otherwise weight of $C$ would be at least $1$. Rest of the vertices must belong to $B$. As $C$ is connected and $B$ is independent, every vertex in $B$ must have an edge in $A\cap V(C)$. Hence, the set corresponding to elements in $A\cap V(C)$ form a set cover of size $k$.
\end{proof}

%
%
%
%

To find a connected subgraph of minimum weight, we use color coding. The technique of color coding was introduced by Alon, Yuster and Zwick~\cite{alon1995colorcoding}. Suppose the size (number of vertices) in the sought subgraph $H$ be $k$. If we color the vertex set of $G$, an $n$-vertex graph, using $k$ colors where each vertex is assigned one of the $k$ colors uniformly and independently at random, then with probability at least $e^{-k}$, the vertices of $H$ are colored with pairwise distinct colors. To justify this argument it is easy to see that there are $k^n$ possible colorings of $V(G)$ and $k!k^{n-k}$ of these colorings are such that $V(H)$ has pairwise distinct colors which implies the probability of success to be at least $\frac{k^n}{k!k^{n-k}}\geq e^{-k}$. A typical sequence of steps is to first obtain a random coloring of $G$, then find a \emph{colorful} subgraph $H$ in $G$. To improve the success probability, repeat the above steps $O(e^k)$ times.

Now we describe the oracle implementation formally.
Given a graph $G$ and a coloring $c:V(G)\to [k]$, a subgraph $H$ of $G$ is called \emph{colorful} if the vertices in $V(H)$ get pairwise distinct colors under $c$. Note that $c$ need not be a proper coloring of $G$ in which we require endpoints of an edge to have distinct colors. In our case, $H$ is a connected subgraph on $\ell+1$ vertices. As the first step of the algorithm, we color the vertices in $V(G)$ uniformly and independently at random. As argued above, with probability at least $e^{-(\ell+1)}$, a connected set $H$ with minimum total weight gets multicolored. Now our task is to find a minimum weight multicolored connected set for which we use dynamic programming.  
 
\begin{theorem}\label{colorfulCOC}
	Let $G$ be an undirected, weighted graph with weights $w:V(G)\to \mathbb{R}$ and let $c:V(G)\to [k]$ be a coloring of its vertices with $k$ colors. There exists a deterministic algorithm that checks in time $3^k\pol$ whether $G$ contains a colorful connected subgraph on $k$ vertices and, if this is the case, returns one such subgraph of smallest weight.
\end{theorem}
\begin{proof}
	We assume that $G$ is a connected graph.
	If $G$ is not a connected graph, then the algorithm described below is run on each connected component, and the output of the algorithm is the logical OR of outputs corresponding to each component.

	Define a dynamic programming table $T$ that takes as input a subset of colors $S$ and a vertex $v$ and returns the minimum weight of a multicolored connected subset that uses $v$ and all the colors in $S$ exactly once. If $v$ is not colored with a color in $S$, then set $T[S,v]=\infty$ . Otherwise, 
\begin{equation}\label{eqn}
	T[S,v]= \min_{S'\subseteq S, v'\in N(v)}\{T[S\setminus S',v]+T[S',v']\} 
\end{equation}
In Equation \ref{eqn}, the subset $S'$ is such that the color of $v$ is not in $S'$ and $v'$ is a neighbor of $v$. Finally, the algorithm returns the smallest value in the last row of the table. If this value is $\infty$, then $G$ does not contain any colorful connected set on $k$ vertices. Otherwise, one can recover the multicolored connected set via back-tracking.

To argue about the correctness of the algorithm, it is sufficient to show the correctness of the Equation \ref{eqn}. We need to show two inequalities. First, we show that $T[S,v]\leq \min_{S'\subseteq S, v'\in N(v)}\{T[S\setminus S',v]+T[S',v']\}$. Observe that for any $v'\in N(v)$, the union of connected sets corresponding to $T[S\setminus S',v]$ and $T[S',v']$ is also a connected set and as $S\setminus S'$ and $S'$ are disjoint, the union is colorful as well. Hence, $T[S,v]\leq T[S\setminus S',v]+T[S',v']$. 

For the other inequality, let $H$ be a connected set corresponding to $T[S,v]$. There is a partition $(A,B)$ of $H$ such that $v\in A$, $B$ contains a neighbor of $v$ and both $G[A]$ and $G[B]$ are connected sets: 

Remove $v$ from $H$. If $H\setminus v$ is connected let $A:=v$ and $B:=H\setminus v$. Otherwise, let $A$ be defined as the union of $v$ with all connected components of $H\setminus v$ except one (say, set $C$) and $B$ has the component $C$. 

Let $S'$ be the set of colors of vertices in $B$ and $S\setminus S'$ be the color of vertices in $A$. Note that $v\in A$. Let $v'$ be a neighbor of $v$ in $B$. By definition, $T[S\setminus S',v]\leq w(A)$ and $T[S',v']\leq w(B)$ where $w(A)$ and $w(B)$ are the sum of weights of vertices in $A$ and $B$ respectively. Hence, $T[S,v]=w(A)+w(B)\geq T[S\setminus S',v]+T[S',v']$. This concludes the proof of correctness of the algorithm.

Observe that there are $2^{|S|}\cdot n$ terms in the equation \ref{eqn}. Hence, each of entries $T[S,v]$ can be computed in $2^{|S|}\pol$ time. There are ${k\choose |S|}$ many subsets of colors of size $|S|$. Hence, the running time of the algorithm is bounded by $\sum_{i=0}^{k}{k \choose i} 2^i\pol=3^k\pol$. 
\end{proof}
   
\paragraph{Derandomization.}
Algorithms based on color coding are randomized, but one can often derandomize these algorithms.
The basic idea of derandomization is as follows: instead of picking a random coloring $c:[n]\to [k]$, we deterministically construct a family $\mathcal{F}$ of functions $f:[n]\to [k]$ such that it is guaranteed that one of the functions from $\mathcal{F}$ has the property that we hope to attain by choosing a random coloring $c$.
\begin{definition}
	An $(n,k,\ell)$-splitter $\mathcal{F}$ is a family of functions from $[n]$ to $[\ell]$ such that for every set $S\subseteq [n]$ of size $k$ there exists a function $f\in \mathcal{F}$ that splits $S$ evenly. That is, for every $1\leq j,j'\leq \ell$, $|f^{-1}(j)\cap S|$ and $|f^{-1}(j')\cap S|$ differ by at most $1$.
\end{definition}
\begin{theorem}[\cite{alon1995colorcoding}]
	For any $n,k\geq 1$ one can construct an $(n,k,k^2)$-splitter of size $k^{\Oh(1)}\log n$ in time $k^{\Oh(1)}n\log n$.
\end{theorem}
\begin{definition}
	An $(n,k,k)$-splitter is called an $(n,k)$-perfect hash family.
\end{definition}
\begin{theorem}[\cite{NaorSS95}]\label{perfect}
	For any $n,k\geq 1$ one can construct an $(n,k)$-perfect hash family of size $e^kk^{\Oh(\log k)}\log n$ in time $e^kk^{\Oh(\log k)}n\log n$.
\end{theorem}
Let $(G,k)$ be the input instance for $\ell$-COC, where $n=|V(G)|$. Instead of taking a random coloring $c$ of $V(G)$, we use Theorem \ref{perfect} to construct an $(n,k)$-perfect hash family $\mathcal{F}$. Then, for each $f\in \mathcal{F}$, we invoke the dynamic programming algorithm of Theorem \ref{colorfulCOC} for the coloring $c:=f$. The properties of an $(n,k)$-perfect hash family $\mathcal{F}$ ensure that, if there exists a connected set $H$ on $\ell+1$ vertices in $G$, there there exists $f\in \mathcal{F}$ that is injective on $V(H)$ and, consequently, the algorithm of Theorem \ref{colorfulCOC} finds a colorful subgraph $H$ for the coloring $c:=f$. As a consequence, we have the following improvement in Theorem~\ref{LinearKernel}.
\begin{theorem}
	For every constant $\ell\in \mathbb{N}$, $\ell$-\textsc{Component Order Connectivity} admits a kernel with at most $2\ell k$ vertices that can be computed in $(3e)^{\ell}\cdot\pol$ time.
\end{theorem}


\subparagraph*{Acknowledgements} 
The research leading to these results has received funding from the European Research Council under the European Union's Seventh Framework Programme (FP7/2007-2013) / ERC grant agreement no. 306992 and the Beating Hardness by Pre-processing grant funded by the Bergen Research Foundation.



\bibliography{p-Kumar}{}




\end{document}